\providecommand{\tabularnewline}{\\}
\numberwithin{equation}{section}
\numberwithin{figure}{section}
\theoremstyle{plain}
\newtheorem{thm}{Theorem}
  \theoremstyle{plain}
  \newtheorem{lem}[thm]{Lemma}
\begin{document}

\title{a cuckoo hashing variant with improved memory utilization and insertion
time}

\author{Ely Porat \and Bar Shalem }

\maketitle

\section{abstract}

Cuckoo hashing \cite{cuckoo hashing} is a multiple choice hashing
scheme in which each item can be placed in multiple locations, and
collisions are resolved by moving items to their alternative locations.
In the classical implementation of two-way cuckoo hashing, the memory
is partitioned into contiguous disjoint fixed-size buckets. Each item
is hashed to two buckets, and may be stored in any of the positions
within those buckets. Ref. \cite{3.5 way} analyzed a variation in
which the buckets are contiguous and overlap. However, many systems
retrieve data from secondary storage in same-size blocks called pages.
Fetching a page is a relatively expensive process; but once a page
is fetched, its contents can be accessed orders of magnitude faster.
We utilize this property of memory retrieval, presenting a variant
of cuckoo hashing incorporating the following constraint: each bucket
must be fully contained in a single page, but buckets are not necessarily
contiguous. Empirical results show that this modification increases
memory utilization and decreases the number of iterations required
to insert an item. If each item is hashed to two buckets of capacity
two, the page size is 8, and each bucket is fully contained in a single
page, the memory utilization equals 89.71\% in the classical contiguous
disjoint bucket variant, 93.78\% in the contiguous overlapping bucket
variant, and increases to 97.46\% in our new non-contiguous bucket
variant. When the memory utilization is 92\% and we use breadth first
search to look for a vacant position, the number of iterations required
to insert a new item is dramatically reduced from 545 in the contiguous
overlapping buckets variant to 52 in our new non-contiguous bucket
variant. In addition to the empirical results, we present a theoretical
lower bound on the memory utilization of our variation as a function
of the page size.

\section{Introduction}

Cuckoo hashing \cite{cuckoo hashing} is a multiple choice hashing
scheme in which each item can be placed in multiple locations, and
collisions are resolved by moving items to their alternative locations.
This hashing scheme resembles the cuckoo's nesting habits: the cuckoo
lays its eggs in other birds' nests. When the cuckoo chick hatches,
it pushes the other eggs out of the nest. Hence the name {}``cuckoo
hashing.'' As Ref. \cite{3.5 way} explains, analysis of hashing
is similar to the analysis of balls and bins. Hashing an item to a
memory location corresponds to throwing a ball into a bin. Insights
from balls and bins processes led to breakthroughs in hashing methods.
For example, if we throw $n$ balls into $n$ bins independently and
uniformly, it is highly probable that the largest bin will get $\left(1+\textrm{o}\left(1\right)\right)\textrm{log}\left(n\right)/\textrm{log}\textrm{log}\left(n\right)$
balls. Azar et. al \cite{Azar} found that if each ball selects two
bins independently and uniformly, and is placed in the bin with fewer
balls, the final distribution is much more uniform. This led to hashing
each item to one of two possible buckets, decreasing the load on the
most-loaded bucket to $\textrm{log\ensuremath{\left(\textrm{log}\left(n\right)\right)}}+\textrm{O}(1)$
with high probability. In general, if each item is hashed into $d\geq2$
buckets, the maximum load decreases to $\textrm{log\ensuremath{\left(\textrm{log}\left(n\right)\right)}}/\textrm{log}\left(d\right)+\textrm{O}(1).$

Cuckoo hashing \cite{cuckoo hashing} is an extension of two-way hashing.
Each item is hashed to a few possible buckets, and existing items
may be moved to their alternate buckets in order to free space for
a new item. There are many variants of cuckoo hashing. The goals of
cuckoo hashing are to increase memory utilization (the number of items
that can be successfully hashed to a given memory size) and to decrease
insertion complexity. Pagh and Rodler \cite{cuckoo hashing} analyzed
hashing of each item to $d=2$ buckets of capacity $k=1$, and demonstrated
that moving items during inserts results in 50\% space utilization
with high probability. Fotakis et. al. \cite{Space efficient hash tables}
analyzed hashing of each item into more than two buckets. Ref. \cite{Efficient hashing with lookups in two memory accesses}
analyzed a practically-important case in which each item is hashed
to $d=2$ buckets of capacity $k=2$. Refs. \cite{The k-orientability thresholds,The random graph threshold for k-orientiability}
found tight memory utilization thresholds for $d=2$ buckets of any
size $k\ge2$. Specifically, they proved that the memory utilization
for $d=2$ and $k=2$ is 89.7\%.

Ref. \cite{Tight Thresholds for Cuckoo Hashing} proved that the maximum
memory utilization thresholds for $d\ge3$ and $k=1$ are equal to
the previously known thresholds for the random k-XORSAT problem. Ref.
\cite{Maximum Matchings,Orientability of Random Hypergraphs} developed
a tight formula for memory utilization for any $d\ge3$ and $k=1$
and Ref. \cite{Multiple-orientability Thresholds} extended the formula
to any $d\ge3$ and $k\ge1$.

\emph{comment added.} While this work was being completed, we became
aware of Ref. \cite{Cuckoo Hashing with Pages} which proposed a model
where the memory is divided into pages and each key has several possible
locations on a single page as well as additional choices on a second
backup page. They provide interesting experimental results.

In a classical implementation of two-way cuckoo hashing, the memory
is partitioned into contiguous disjoint fixed-sized buckets of size
$k$ . Each item is hashed to 2 buckets and may be stored in any of
the $2k$ locations within those buckets. Ref. \cite{3.5 way} analyze
a variation in which the buckets overlap. For example, if the bucket
capacity $k$ is 3, the disjoint bucket memory locations are: $\left\{ 0,1,2\right\} ,\left\{ 3,4,5\right\} ,\left\{ 6,7,8\right\} ,\ldots$.
whereas the overlapping bucket memory locations are: $\left\{ 0,1,2\right\} ,\left\{ 1,2,3\right\} ,\left\{ 3,4,5\right\} ,\ldots$.

Their empirical results show that this variation increases memory
utilization from 89.7\% to 96.5\% for $d=2$ and $k=2$. However,
many systems retrieve data from secondary storage in same-size blocks
called pages. Fetching a page is a relatively expensive process, but
once a page is fetched, its contents can be accessed orders of magnitude
more quickly. We utilize this property of memory retrieval to present
a variant of cuckoo hashing requiring that each bucket be fully contained
in a single page but buckets are not necessarily contiguous.

In this paper we compare the following three variants of cuckoo hashing:
\begin{enumerate}
\item CUCKOO-CHOOSE-K- the algorithm introduced in this paper. The buckets
are \emph{any} $k$ cells in a page, not necessarily in contiguous
locations. There are $\binom{t}{k}$ buckets in a page, where $t$
is the size of the page. 
\item CUCKOO-OVERLAP \cite{3.5 way}- The buckets are contiguous and overlap.
Here we assume that all buckets are fully contained in a single page,
so there are $t-k+1$ buckets in a page. This is a generalization
of Ref. \cite{3.5 way}. Originally Ref. \cite{3.5 way} did not consider
dividing the memory into pages. 
\item CUCKOO-DISJOINT \cite{cuckoo hashing}- The buckets are contiguous
and not overlapping. There are $t/k$ buckets in a page. This is a
generalization of Ref. \cite{cuckoo hashing}. Originally Ref. \cite{cuckoo hashing}
did not consider larger buckets. 
\end{enumerate}
Note that algorithm CUCKOO-DISJOINT is the extreme case of the CUCKOO-OVERLAP
and CUCKOO-CHOOSE\_K algorithms when the size of the page $t$ equals
the size of the bucket $k$.

We prove theoretically and present empirical evidence that our CUCKOO-CHOOSE-K
modification increases memory utilization. Moreover, using the classical
cuckoo hashing scheme, an item insertion requires multiple look-ups
of candidates to displace. Empirical results show that our modification
dramatically decreases the number of candidate look-ups required to
insert an item compared to Ref. \cite{3.5 way}. In the overlapping
buckets variant \cite{3.5 way}, some buckets are split between two
pages, so that each item resides in up to $2d$ pages. In our variant,
each bucket is fully contained in a single page.

An appealing experimental result is that CUCKOO-CHOOSE-K memory utilization
converges very quickly as a function of the page size $t$. When $k=2$
and $t=16$, memory utilization is 0.9763. This value is almost identical
to memory utilization when $t=2^{20}$ which equals 0.9767. CUCKOO-OVERLAP
memory utilization when $t=16$ is 0.9494, and CUCKOO-DISJOINT memory
utilization is only 0.8970. If we allow a tiny gap of one inside the
buckets $(t=3),$ memory utilization increases from 0.9229 (CUCKOO-OVERLAP)
to 0.9480 (CUCKOO-CHOOSE-K). Table \ref{fig:Number-of-iteration}
specifies the parameters used in our analysis.

\section{Theoretical analysis}

We can determine the success of cuckoo hashing by analyzing the cuckoo
hyper graph. The vertices of the graph are the memory locations. The
hyper-edges of the graph connect all the memory locations where each
item could be placed. Recall that each item can be placed in $d$
buckets chosen uniformly and independently of other items. Each bucket
is composed of any $k$ locations in a page.

It is well known (see, e.g., ref. \cite{3.5 way} for a proof) that
a cuckoo hash \emph{fails}$ $ if and only if there is a sub-graph
$S$ with $v$ vertices and more than $v$ edges. We say that a sub-graph
$S$ has \emph{failed} if it has more edges than vertices.

\begin{table}
\begin{tabular}{|l|l|l|}
\hline 
Symbol & Description  & Comments\tabularnewline
\hline
\hline 
$n$  & number of vertices  & $n\rightarrow\infty$\tabularnewline
 & (hash table capacity)  & \tabularnewline
\hline 
$m$  & number of edges  & $m\leq n$, $m\rightarrow\infty$ \tabularnewline
 & (hashed items)  & \tabularnewline
\hline 
$d$  & number of buckets each  & typical value: 2, $dk>2$\tabularnewline
 & item is hashed to  & \tabularnewline
\hline 
$k$  & bucket size  & typical value: 2-3, $dk>2$\tabularnewline
\hline 
$t$  & page size  & $t\geq k$. $k$ divides $t$. $t$ divides $n$.\tabularnewline
\hline 
$g$  & number of pages  & $g=n/t$. \tabularnewline
\hline 
$\beta$  & memory utilization  & $\beta=\frac{m}{n}$,$0<\beta\leq1$$ $\tabularnewline
\hline 
$V$  & a set of vertices  & $\left|V\right|=v$\tabularnewline
\hline 
$S=S\mbox{(V,E) }$  & a sub-graph  & \tabularnewline
\hline 
$v$  & $v=\left|V\right|$  & $dk\le v<m$. \tabularnewline
 &  & (if $v\geq m$ or $v<dk$ then $S$ cannot fail).\tabularnewline
\hline 
$x$  & $x=\frac{v}{n}$  & $\frac{dk}{n}\le x<\beta$$ $.\tabularnewline
\hline 
$\epsilon$ & a small constant & $0<\epsilon\ll1$\tabularnewline
\hline 
$\delta$ & a small constant & $0<\delta\ll1$\tabularnewline
\hline
\end{tabular}\caption{\label{tab:Parameter-names-and}Parameter names and descriptions}

\end{table}

We will begin by analyzing the probability of success of CUCKOO-CHOOSE-K
for the case where the page size $t$ equals the array size $n$.
This simple and special case is presented here to introduce the main
ideas applied in the following section, where we analyze the general
case where the page size is a finite constant (independent of $n$).

\subsection{Memory utilization when the page size $t$ equals the array size
$n$}

An analysis of memory utilization has been performed previously in
\cite{Space efficient hash tables}. In their analysis, they assume
$k=1$ and prove that if $d\ge\frac{2}{\beta}\log\left(\frac{\textrm{e}\beta}{1-\beta}\right)$,
then the hashing will be successful with a probability of at least
$1-\textrm{O}(n^{4-2d})$. Here we derive a similar constraint on
memory utilization $\beta$. We solve the constraint numerically for
different values of $k$ and $d$, and obtain a lower bound on possible
memory utilization for the specified values. We perform the analysis
using a modification of the method in \cite{Space efficient hash tables},
which we will later generalize to page sizes being equal to any given
constant.

We will bound the failure probability using the union bound. But first,
we would like to reduce redundant summations. We observe that: 
\begin{itemize}
\item If there exists a sub-graph $S\left(V,E\right)$ with $\left|E\right|>\left|V\right|$
and there exists an edge that has exactly one vertex $v_{0}$ outside
of V, then the sub-graph $S'\left(V'=V\cup\left\{ v_{0}\right\} ,E'\right)$
also has more edges than vertices since $\left|E'\right|\geq\left|E\right|+1>\left|V\right|+1=\left|V'\right|$. 
\item If there exists a sub-graph $S\left(V,E\right)$, such that $\left|V\right|=v$
and $\left|E\right|>v+1$ then there exists a sub-graph $S'\left(V',E'\right)$
such that $\left|E'\right|=\left|V'\right|+1$. We can find such a
sub-graph simply by adding vertices to $V$ one by one until we get
a sub-graph where the number of edges equals the number of vertices
plus one. 
\end{itemize}
For each sub-graph $S$ we define an indicator variable $Z_{S}.$

$Z_{S}=\begin{cases}
\textrm{1} & S\left(V,E\right)\textrm{\,\ has\,}\left|V\right|=v\textrm{\,\ vertices\,\ and\,}\left|E\right|=v+1\,\textrm{edges}\textrm{\,\ AND}\\
 & \textrm{There\,\ is\,\ no\,\ edge\,\ that\,\ connects\,}V\textrm{\,\ to\,\ exactly\,\ one\,\ vertex\,\ from\,\ outside\,\ of\,}V\\
\textrm{0} & \textrm{otherwise}\end{cases}$

If $Z_{s}=1$, then we will say that $S$ is a \emph{bad} sub-graph,
and otherwise we will say that $S$ is a \emph{good} sub-graph. If
the sum over $Z_{s}$ of all sub-graphs is equal to zero then every
sub-graph is good then the cuckoo hash \emph{succeeded}$ $. We will
find the the memory utilization $\beta$ such that the sum over $Z_{s}$
of all sub-graphs is $\textrm{o}(1)$ as $n\rightarrow\infty$. 

Let $p_{hit}$ be the probability that a random edge hits $V$. Let
$p_{1}$ be the probability that a random edge connects $V$ to exactly
one vertex from outside of V and let $p_{bad}\left(v\right)$ be probability
that \emph{a given }sub-graph $S\left(V,E\right)$ is bad. 
\begin{lem}
$p_{bad}\left(v\right)=\binom{m}{v+1}p_{hit}^{v+1}\left(1-p_{1}-p_{hit}\right)^{m-\left(v+1\right)}$\end{lem}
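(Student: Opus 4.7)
The plan is to exploit the mutual independence of the hash-placements of the $m$ items. Fix any vertex set $V$ with $|V|=v$, and classify each item $i \in \{1,\ldots,m\}$ according to how its hyperedge $e_i$ meets $V$: (a) all $dk$ endpoints of $e_i$ lie in $V$; (b) exactly one endpoint of $e_i$ lies outside $V$; (c) two or more endpoints of $e_i$ lie outside $V$. These three events partition the per-edge sample space and, by the very definitions of $p_{hit}$ and $p_1$, carry probabilities $p_{hit}$, $p_1$, and $1 - p_{hit} - p_1$, respectively.

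Unwinding the indicator $Z_S$, the sub-graph on $V$ is bad precisely when (i) exactly $v+1$ of the $m$ items fall into category (a), so that $|E| = v+1$, and (ii) no item falls into category (b). Items in (a) already have every endpoint inside $V$, so condition (ii) is equivalent to forcing every item not in (a) into (c). Because the items are hashed independently of one another, for any \emph{specific} subset $T \subseteq \{1,\ldots,m\}$ with $|T|=v+1$, the probability that the items in $T$ all land in (a) and those in its complement all land in (c) is $p_{hit}^{v+1}(1 - p_{hit} - p_1)^{m - (v+1)}$. The events indexed by different choices of $T$ are disjoint (two different subsets cannot both equal the exact set of category-(a) items), so summing over the $\binom{m}{v+1}$ choices of $T$ produces the claimed expression.

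There is no substantive obstacle; the argument is a direct binomial-style calculation once the right partition of the per-edge sample space is identified. The only steps that genuinely need justification are that (a), (b), (c) partition the per-edge sample space (tautological) and that the per-item classifications are mutually independent, which follows from the standing assumption that each item chooses its $d$ buckets independently of the other items. Notably, the answer depends only on $|V|=v$, not on $V$ itself, which is what allows the symbol $p_{bad}(v)$ to be well-defined.
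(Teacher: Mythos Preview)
Your argument is correct and follows the same approach as the paper's proof, which is a one-line remark that the formula is ``immediate from the definition of $Z_S$'': exactly $v+1$ of the $m$ edges must hit $V$ and the remaining edges must neither hit $V$ nor connect $V$ to exactly one outside vertex. Your proposal simply unpacks this into the explicit three-way partition (a)/(b)/(c) of the per-edge sample space and spells out the independence and disjointness that make the binomial count valid; nothing beyond that is needed.
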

\begin{proof}
Immediate from the definition of $Z_{s}$. For S to be bad, exactly
$v+1$ edges out of $m$ edges must hit $V$ and all the rest must
miss $V$ and must not connect $V$ to exactly one vertex from outside
of V.\end{proof}
\begin{lem}
The probability that a random edge hits $V$ is $p_{hit}\left(v\right)=\left(\frac{\binom{v}{k}}{\binom{n}{k}}\right)^{d}$\end{lem}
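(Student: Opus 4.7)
The plan is to unpack the phrase ``a random edge hits $V$'' consistently with the good/bad subgraph framework just introduced, and then reduce the statement to an elementary counting argument together with independence of the bucket choices. In the setup above, an edge is counted as belonging to the induced subgraph $S(V,E)$ precisely when all of its vertices lie in $V$; so ``the edge hits $V$'' must be read as ``the edge is entirely contained in $V$'', i.e., each of its $d$ constituent buckets is a $k$-subset of $V$.

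First I would pin down the probabilistic model in the special case $t=n$. Under CUCKOO-CHOOSE-K, each of the $d$ buckets assigned to an item is a uniformly random $k$-element subset chosen from the single page of $n$ cells, and the $d$ buckets are drawn independently. There are $\binom{n}{k}$ possible $k$-subsets, of which exactly $\binom{v}{k}$ lie entirely in $V$. Hence the probability that one fixed bucket is contained in $V$ is $\binom{v}{k}/\binom{n}{k}$.

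Next I would invoke independence of the $d$ bucket selections. The event that the edge is contained in $V$ is the intersection of the $d$ independent events ``bucket $i$ is contained in $V$'', each of probability $\binom{v}{k}/\binom{n}{k}$. Multiplying gives
\[
p_{hit}(v)=\left(\frac{\binom{v}{k}}{\binom{n}{k}}\right)^{d},
\]
which is the claimed formula.

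I do not expect any real obstacle in this lemma; it is essentially a definition chase plus a one-line counting computation. The only two points that merit a sentence each are (i) justifying the interpretation of ``hit'' as ``contained in'', so that the answer involves $\binom{v}{k}/\binom{n}{k}$ rather than a complementary inclusion--exclusion expression, and (ii) noting that a bucket is an \emph{unordered} $k$-subset of the page, so binomial coefficients (not falling factorials) are the correct combinatorial objects. Once these are observed, the rest is automatic.
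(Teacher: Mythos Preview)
Your proposal is correct and follows essentially the same approach as the paper's own proof: count the $\binom{v}{k}$ buckets contained in $V$ out of $\binom{n}{k}$ total, then use independence of the $d$ bucket choices to take the $d$-th power. The paper's proof is a terse two-sentence version of exactly this argument, without your additional remarks on the interpretation of ``hit'' or the unordered nature of buckets.
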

\begin{proof}
Each item is hashed independently to $d$ buckets and the size of
each bucket is $k$. The number of buckets in $V$ is therefore $\binom{v}{k}$
and the total number of buckets is $\binom{n}{k}$.\end{proof}
\begin{lem}
The probability that a random edge connects $V$ to exactly one vertex
from outside of V is $p_{1}=d\frac{\left(n-v\right)\binom{v}{k-1}}{\binom{n}{k}}\left(\frac{\binom{v}{k}}{\binom{n}{k}}\right)^{d-1}$\end{lem}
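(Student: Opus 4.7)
The plan is to mirror the structure of the immediately preceding lemma (the computation of $p_{hit}$): use the fact that the $d$ buckets of a random edge are chosen independently and uniformly among the $\binom{n}{k}$ subsets of size $k$ of the $n$ memory locations, decompose the event of interest over which bucket is responsible for the single outside vertex, and then multiply the per-bucket probabilities.

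Concretely, I would first rewrite the event ``a random edge connects $V$ to exactly one vertex outside $V$'' as: exactly one of the $d$ buckets contains a single vertex outside $V$ (with its other $k-1$ vertices lying inside $V$), and each of the remaining $d-1$ buckets is entirely contained in $V$. There are $d$ choices for which bucket plays the role of the ``boundary'' bucket, and by symmetry these $d$ events contribute equally. For a fixed choice of the boundary bucket, the number of size-$k$ subsets of $\{1,\dots,n\}$ having exactly $k-1$ elements in $V$ and exactly one element outside $V$ equals $\binom{v}{k-1}(n-v)$, so the probability that a given bucket has this form is $\binom{v}{k-1}(n-v)/\binom{n}{k}$. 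For each of the other $d-1$ buckets, the probability of being entirely contained in $V$ is $\binom{v}{k}/\binom{n}{k}$, exactly as in the proof of the previous lemma. By independence of the $d$ bucket choices, multiplying these factors and summing (i.e.\ multiplying by $d$) gives the claimed expression.

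The main subtlety, and the place I would be most careful, is the potential double counting arising from configurations in which two distinct buckets each have $k-1$ vertices in $V$ and one vertex outside, but the two ``outside'' vertices happen to coincide. Such a configuration still yields an edge whose vertex set contains exactly one vertex outside $V$. I would handle this by interpreting the event exactly as the disjoint union above (one boundary bucket, $d-1$ fully-inside buckets), which is what the stated formula counts; any coincident-vertex configurations are captured by counting the boundary bucket separately in each of the $d$ symmetric terms, and the disjointness of the $d$ cases is guaranteed by requiring the other $d-1$ buckets to lie entirely inside $V$. Thus the $d$ contributions do not overlap, and the product formula follows directly.
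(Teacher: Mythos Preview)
Your decomposition is exactly the paper's one-line argument: ``$d-1$ buckets must all fall in $V$ and one bucket must contain any $k-1$ vertices from $V$ and any of the $n-v$ vertices from outside of $V$.''

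One small correction to your subtlety paragraph: the coincident-vertex configurations (two or more buckets each having $k-1$ vertices in $V$ and sharing the \emph{same} single outside vertex) are \emph{not} captured by any of your $d$ symmetric terms, since each term forces the remaining $d-1$ buckets to lie fully inside $V$. So the stated formula is the exact probability of the narrower event ``exactly one bucket straddles and the other $d-1$ lie in $V$,'' or equivalently a lower bound on the probability that the edge's vertex set meets $V^{c}$ in exactly one point. Either reading is fine for the paper's purposes, because $p_{1}$ is only ever used through the factor $(1-p_{1}-p_{hit})^{m-(v+1)}$, where a lower bound on $p_{1}$ yields the desired upper bound.
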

\begin{proof}
$d-1$ buckets must all fall in $V$ and one bucket must contain any
$k-1$ vertices from $V$ and any of the $n-v$ vertices from outside
of $V$. 
\end{proof}
Let $P_{bad}\left(v\right)$ be the probability that there exists
a sub-graph $S$ with $v$ vertices such that $S$ is bad. According
to the union bound, $P_{bad}\left(v\right)<N\left(v\right)\cdot p_{bad}\left(v\right)$,
where $N_{v}=\binom{n}{v}$ is the number of sub-graphs with $v$
vertices. We are going to analyze $P_{bad}\left(v\right)$ as $n\rightarrow\infty$.
If for all $v$, $P_{bad}\left(v\right)=\textrm{o}\left(\frac{1}{n}\right)$,
then $\sum_{v=dk}^{n}P_{bad}\left(v\right)\le\textrm{o}\left(1\right)$
and the cuckoo hash succeeds with high probability. The analysis is
similar to the analysis given in \cite{Space efficient hash tables}.
Let $x_{0}=\exp\left(\frac{-2}{dk-2}\right)$. We divide the analysis
into two sections. In section \ref{sub:x<x0 Analysis} we show that
for any memory utilization $\beta$ and $\forall\frac{dk}{n}\le x<x_{0}$,
$P_{bad}\left(x\right)$ is $\textrm{o}\left(\frac{1}{n}\right)$.
In section \ref{sub:x>x0 Analysis}we find the maximum memory utilization
$\beta$ such that $\forall x_{0}\le x<\beta$, $P_{bad}\left(x\right)$
is exponentially small.

\subsubsection{\label{sub:x<x0 Analysis}\textup{$P_{bad}\left(x\right)$} Analysis
for $\frac{dk}{n}\le x<x_{0}$}

In this section we show that if $dk>2$ then for any memory utilization
$\beta$ and $\forall\frac{dk}{n}\le x<x_{0}$, $P_{bad}\left(x\right)$
is $\textrm{o}\left(\frac{1}{n}\right)$.
\begin{lem}
\label{lem:Pv<c0c1}$P_{bad}\left(x\right)<c_{0}\left(x\right)\cdot c_{1}^{n}\left(x\right)$, 

where $c_{0}\left(x\right)=\mathrm{e}\cdot x^{dk-1}$ and $c_{1}\left(x\right)=\mathrm{e}^{2x}\cdot x^{\left(dk-2\right)x}$\textup{.}\end{lem}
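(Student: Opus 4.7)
The plan is to substitute the formulas from the three preceding lemmas into the union bound $P_{bad}(v)\leq N(v)\cdot p_{bad}(v)=\binom{n}{v}\binom{m}{v+1}p_{hit}^{v+1}(1-p_{1}-p_{hit})^{m-v-1}$ and estimate each factor. First I would discard the factor $(1-p_{1}-p_{hit})^{m-v-1}$ by bounding it above by $1$; this is legitimate because the two events ``a random edge lies entirely in $V$'' and ``a random edge meets $V$ in exactly one outside vertex'' are disjoint, so $p_{1}+p_{hit}\in[0,1]$. Dropping this factor is wasteful in general but is exactly what the lemma allows, and it is harmless in the small-$x$ regime considered in this subsection.

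Next I would apply the standard estimate $\binom{a}{b}\leq(\mathrm{e}a/b)^{b}$ to both binomials. Setting $x=v/n$, this gives $\binom{n}{v}\leq(\mathrm{e}/x)^{v}$, and using $m\leq n$ together with $v+1\geq v$ it gives $\binom{m}{v+1}\leq(\mathrm{e}m/(v+1))^{v+1}\leq(\mathrm{e}/x)^{v+1}$. For $p_{hit}$, I would observe that the ratio $\binom{v}{k}/\binom{n}{k}$ factors as a product of $k$ terms $(v-i)/(n-i)$ with $0\leq i<k$, each of which is at most $v/n=x$ (since $v\leq n$ implies $(v-i)/(n-i)\leq v/n$). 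Hence $p_{hit}\leq x^{dk}$ and consequently $p_{hit}^{v+1}\leq x^{dk(v+1)}$.

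Combining the three ingredients gives
\[
P_{bad}(v)\leq\left(\frac{\mathrm{e}}{x}\right)^{2v+1}x^{dk(v+1)}=\mathrm{e}^{2v+1}\,x^{(dk-2)v+(dk-1)}=\mathrm{e}\cdot x^{dk-1}\cdot\mathrm{e}^{2v}\cdot x^{(dk-2)v},
\]
after which the identities $\mathrm{e}^{2v}=(\mathrm{e}^{2x})^{n}$ and $x^{(dk-2)v}=(x^{(dk-2)x})^{n}$ repackage the right-hand side as exactly $c_{0}(x)\cdot c_{1}(x)^{n}$. There is no real obstacle here: the argument is a routine combination of standard binomial bounds together with the elementary estimate $\binom{v}{k}/\binom{n}{k}\leq x^{k}$. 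The one point worth flagging is the single leftover factor of $x$ coming from $(\mathrm{e}/x)^{2v+1}$ instead of $(\mathrm{e}/x)^{2v}$; this is precisely what produces the prefactor $c_{0}(x)=\mathrm{e}\cdot x^{dk-1}$ rather than folding cleanly into an $n$-th power, and it is the reason the bound is stated with a separate $c_{0}$ and $c_{1}^{n}$ in the first place.
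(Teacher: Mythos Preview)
Your proposal is correct and follows essentially the same approach as the paper: drop the $(1-p_1-p_{hit})^{m-v-1}$ factor, bound both binomials by $(\mathrm{e}/x)^{v}$ and $(\mathrm{e}/x)^{v+1}$ via the standard $\binom{a}{b}\le(\mathrm{e}a/b)^b$, bound $p_{hit}\le x^{dk}$, and regroup. The only cosmetic difference is that you justify $\binom{v}{k}/\binom{n}{k}\le x^{k}$ explicitly via the termwise inequality $(v-i)/(n-i)\le v/n$, whereas the paper simply asserts $p_{hit}<x^{dk}$.
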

\begin{proof}
See appendix \ref{sec:Proof-of-Pv<c0c1}.\end{proof}
\begin{thm}
\label{lem:Pv is good t=00003Dn}

$\mbox{Let}$ $\delta$ be a small constant, $0<\delta\ll$1. If $dk\ge3$
then for any load $\beta$:

1. If $\frac{dk}{n}\le x\le\frac{dk}{n}+\delta$, then $P_{bad}\left(x\right)<\textrm{O\ensuremath{\left(n^{-\left(\left(dk\right)^{2}-dk-1\right)}\right)}}=\textrm{o}\left(\frac{1}{n}\right)$
. 

2. If $\frac{dk}{n}+\delta\le x\le x_{0}-\delta$, then $P_{bad}\left(x\right)$
decreases exponentially as $n\rightarrow\infty$.\end{thm}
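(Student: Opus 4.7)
The plan is to apply Lemma~\ref{lem:Pv<c0c1} and analyze the resulting exponential bound as a function of $v=xn$. Taking logarithms,
$\log P_{bad}(x)\le 1+(dk-1)\log x+nx\bigl(2+(dk-2)\log x\bigr)$,
and the substitution $v=xn$ rewrites this as $1+\phi(v)$ where $\phi(v):=2v+\bigl((dk-1)+(dk-2)v\bigr)\log(v/n)$. I would reduce both parts of the theorem to studying $\phi$ (and, for Part 2, the auxiliary function $f(x):=2+(dk-2)\log x$) on the appropriate range.

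For Part 1, the strategy is to show that $\phi$ is strictly decreasing on $[dk,\,dk+\delta n]$ (provided $\delta$ is chosen small enough, independently of $n$), so that its maximum on this interval is $\phi(dk)$. Differentiating gives $\phi'(v)=dk+(dk-1)/v+(dk-2)\log(v/n)$ and $\phi''(v)=-(dk-1)/v^{2}+(dk-2)/v$, and the latter is positive whenever $v>(dk-1)/(dk-2)$, which holds for $v\ge dk\ge 3$. Hence $\phi'$ is increasing on our interval; at the left endpoint $\phi'(dk)\to-\infty$ as $n\to\infty$ because of the $(dk-2)\log(dk/n)$ term, while at the right endpoint $\phi'(dk+\delta n)\to dk+(dk-2)\log\delta$, which is negative provided $\delta<\exp(-dk/(dk-2))$. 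Under this choice, $\phi'<0$ throughout, whence $\max\phi=\phi(dk)=2dk+\bigl((dk)^{2}-dk-1\bigr)\log(dk/n)$. Exponentiating yields $P_{bad}(x)\le O(1)\cdot n^{-((dk)^{2}-dk-1)}$, and since $dk\ge 3$ forces $(dk)^{2}-dk-1\ge 5$, this is $o(1/n)$.

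For Part 2 I would use the logarithmic bound directly, since in this range $x$ is a positive constant bounded away from both $0$ and $x_{0}$. Because $x_{0}=\exp(-2/(dk-2))$ is defined so that $f(x_{0})=0$ and $f$ is strictly increasing, for $x\le x_{0}-\delta$ one obtains $f(x)\le f(x_{0}-\delta)=(dk-2)\log(1-\delta/x_{0})=:-C$, a strictly negative constant depending only on $\delta,d,k$. Combined with $x\ge\delta$ (valid once $n$ is large enough that $dk/n<\delta$), the term $nx\,f(x)\le-Cn\delta$ dominates the bounded $(dk-1)\log x$ contribution, yielding $\log P_{bad}(x)\le O(1)-Cn\delta$ and hence exponential decay in $n$.

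The chief subtlety is the monotonicity argument in Part 1: one needs the unique critical point of $\phi$, which occurs near $v\approx n\exp(-dk/(dk-2))$, to lie strictly to the right of the interval $[dk,\,dk+\delta n]$. This forces the quantitative restriction $\delta<\exp(-dk/(dk-2))$, which is permissible since the theorem only stipulates that $\delta$ is a small constant. Once $\delta$ is fixed accordingly, the remainder is routine first- and second-derivative analysis together with the algebraic identity $(dk-1)+(dk-2)(dk)=(dk)^{2}-dk-1$ that pins down the exponent of $n$.
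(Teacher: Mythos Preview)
Your proposal is correct and follows essentially the same approach as the paper: both apply Lemma~\ref{lem:Pv<c0c1} and analyze the resulting bound $c_0(x)\,c_1(x)^n$, noting that $\log c_1(x)=x\bigl(2+(dk-2)\log x\bigr)$ is negative precisely when $x<x_0$. Your treatment of Part~1 is in fact more careful than the paper's, which simply evaluates the bound at $x=dk/n$ without justifying that this is the maximum over the interval; your derivative argument (with the harmless additional restriction $\delta<\exp\!\bigl(-dk/(dk-2)\bigr)$, permissible since $\delta$ is a small constant) fills that gap.
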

\begin{proof}
For any memory utilization $\beta$, $P_{bad}\left(x\right)<c_{0}\left(x\right)\cdot c_{1}^{n}\left(x\right)$
,

where $c_{0}\left(x\right)=\mathrm{e}\cdot x^{dk-1}$ and $c_{1}\left(x\right)=\mathrm{e}^{2x}\cdot x^{\left(dk-2\right)x}$. 

Note that $c_{0}\left(x\right)$ and $c_{1}\left(x\right)$ are independent
of $\beta$. Recall that $x_{0}=\exp\left(\frac{-2}{dk-2}\right).$
$\forall\frac{dk}{n}+\delta\le x\le x_{0}-\delta$, there exist a
constant $\epsilon$ such that $c_{1}\left(x\right)<1-\epsilon$ .
We obtain that: 

1. If $x\rightarrow\frac{dk}{n}$, then $P_{bad}\left(x\right)<\textrm{lim}{}_{x\rightarrow\frac{dk}{n}}c_{0}\left(x\right)\cdot c_{1}^{n}\left(x\right)=\textrm{\textrm{O\ensuremath{\left(n^{-\left(\left(dk\right)^{2}-dk-1\right)}\right)}}}\le\textrm{O}\left(n^{-5}\right)=\textrm{o}\left(\frac{1}{n}\right)$
.

2. If $\frac{dk}{n}+\delta\leq x\le x_{0}-\delta,$ then $c_{1}\left(x\right)<1-\epsilon$,
and $P_{bad}\left(x\right)<c_{0}\left(x\right)\cdot c_{1}^{n}\left(x\right)$
decreases exponentially as $n\rightarrow\infty$.
\end{proof}

\subsubsection{\label{sub:x>x0 Analysis}\textup{$P_{bad}\left(x\right)$} Analysis
for $x_{0}\le x<\beta$}

In this section we are going to find the maximum memory utilization
$\beta$ such that $\forall x_{0}\le x<\beta$, the probability that
there exists a bad sub-graph is exponentially small.
\begin{lem}
\label{lem:Pv<c2c3c4c5}$\forall x_{0}\le x<\beta$, $P_{bad}\left(x\right)<\textrm{O}(1)\cdot c_{5}^{n}\left(x,\beta\right)$
where

$c_{5}\left(x,\beta\right)=\left(\frac{1}{1-x}\right)^{\left(1-x\right)}\left(\frac{1}{x}\right)^{x}\left(\frac{\beta}{\beta-x}\right)^{\left(\beta-x\right)}\left(\frac{\beta}{x}\right)^{x}x^{dkx}\left(1-dk\left(1-x\right)x^{dk-1}-x^{dk}\right)^{\beta-x}$\end{lem}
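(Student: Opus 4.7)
The plan is to combine the union bound $P_{bad}(x)\le N(v)\,p_{bad}(v)=\binom{n}{v}\binom{m}{v+1}p_{hit}^{v+1}(1-p_1-p_{hit})^{m-v-1}$ with Stirling-type estimates for the two binomial prefactors and asymptotic expansions of the two hit probabilities. Substituting $v=xn$ and $m=\beta n$, each of these four ingredients will contribute one or two of the six factors making up $c_{5}(x,\beta)$, and all lower-order corrections will be absorbed into the multiplicative $O(1)$ prefactor. This mirrors the strategy of Lemma~\ref{lem:Pv<c0c1}, but now the factor $(1-p_{1}-p_{hit})^{m-v-1}$ is retained rather than upper-bounded by $1$, since for $x\ge x_{0}$ this factor is bounded away from $1$ and provides the crucial additional decay carried by the last factor in $c_{5}$.

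First I would apply the entropy form of Stirling's approximation, $\binom{an}{bn}=\Theta(1/\sqrt{n})\bigl((a/b)^{b}(a/(a-b))^{a-b}\bigr)^{n}$, to obtain $\binom{n}{v}=\Theta(1/\sqrt{n})\bigl((1-x)^{-(1-x)}x^{-x}\bigr)^{n}$ and, absorbing an $O(1)$ correction from the shift in the lower index, $\binom{m}{v+1}=\Theta(1/\sqrt{n})\bigl((\beta/(\beta-x))^{\beta-x}(\beta/x)^{x}\bigr)^{n}$. Next, since $k$ is a fixed constant and $v=xn$, the identity $\binom{v}{k}/\binom{n}{k}=\prod_{i=0}^{k-1}(v-i)/(n-i)=x^{k}(1+O(1/n))$ combined with the lemma for $p_{hit}$ gives $p_{hit}=x^{dk}(1+O(1/n))$, so $p_{hit}^{v+1}=x^{dkxn}\cdot O(1)$. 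An analogous expansion $(n-v)\binom{v}{k-1}/\binom{n}{k}=k(1-x)x^{k-1}(1+O(1/n))$ together with the lemma for $p_{1}$ yields $p_{1}=dk(1-x)x^{dk-1}(1+O(1/n))$, hence $1-p_{1}-p_{hit}=(1-dk(1-x)x^{dk-1}-x^{dk})(1+O(1/n))$, and raising this to the power $m-v-1=(\beta-x)n-1$ produces $(1-dk(1-x)x^{dk-1}-x^{dk})^{(\beta-x)n}$ times a bounded factor.

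The main technical obstacle is verifying that the aggregate of the various $(1+O(1/n))^{O(n)}$ corrections really is a uniform $O(1)$ constant on the whole interval $x_{0}\le x<\beta$. Because $x_{0}=\exp(-2/(dk-2))>0$, the denominators $x^{dk}$ and $x^{dk-1}$ appearing in the relative-error estimates stay bounded away from $0$ uniformly in $x$, so the implicit big-$O$ constants do not blow up and their exponentials give a uniformly bounded multiplier; the two $\Theta(1/\sqrt{n})$ Stirling prefactors combine to a $\Theta(1/n)$ that is dominated by $O(1)$. One must also confirm that the base $1-dk(1-x)x^{dk-1}-x^{dk}$ is strictly positive on $x_{0}\le x<\beta$; a direct differentiation shows this function is strictly decreasing on $[0,1]$ with value $0$ at $x=1$, so positivity follows from $x<\beta\le 1$. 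Multiplying the four asymptotic estimates and collecting like factors reproduces exactly the expression for $c_{5}(x,\beta)$, completing the bound.
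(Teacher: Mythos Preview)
Your proposal is correct and follows essentially the same route as the paper: apply the union bound $P_{bad}\le\binom{n}{v}\binom{m}{v+1}p_{hit}^{v+1}(1-p_1-p_{hit})^{m-v-1}$, replace each factor by its entropy-form asymptotic, and absorb the $O(1/n)$ corrections (valid uniformly because $x\ge x_0$ is bounded away from $0$) into the prefactor. The only cosmetic difference is that the paper uses the elementary inequality $\binom{n}{v}\le(1/(1-x))^{(1-x)n}(1/x)^{xn}$ directly (a special case of Lemma~\ref{lem:multinom_bound}) rather than Stirling with a $\Theta(1/\sqrt{n})$ prefactor, and collects the residual non-exponential factors into an explicit $c_4(x,\beta)=O(1)$ before discarding them.
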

\begin{proof}
See appendix \ref{sec:Proof-Of-Pv<c2c3c4c5}\end{proof}
\begin{thm}
If $c_{5}\left(x,\beta\right)<1-\epsilon$ for all $x_{0}\le x<\beta$,
then $P_{bad}\left(x\right)$ decreases exponentially as $n\rightarrow\infty$.
Any memory utilization $\beta$ that satisfies the constraint is a
lower bound on the possible memory utilization.\end{thm}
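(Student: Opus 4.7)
The first half of the theorem is essentially immediate from Lemma \ref{lem:Pv<c2c3c4c5}. That lemma gives $P_{bad}(x) < O(1)\cdot c_5^n(x,\beta)$, and the hypothesis $c_5(x,\beta) < 1-\epsilon$ then yields $P_{bad}(x) < O(1)\cdot(1-\epsilon)^n$, which decays exponentially in $n$ for every $x_0 \le x < \beta$. I would write this as a one-line substitution and move on to the substantive content.

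The substantive claim is that any $\beta$ satisfying the hypothesis is actually achievable as a memory utilization. My plan is a union bound over the size $v$ of a bad subgraph. Recall from Section 3 that the cuckoo hash fails if and only if there exists some bad subgraph, so the failure probability is at most $\sum_{v=dk}^{m-1} P_{bad}(v/n)$. Splitting this sum to match the earlier analysis: for $v/n \in [dk/n,\, dk/n+\delta]$, the first part of Theorem \ref{lem:Pv is good t=00003Dn} gives each term as $o(1/n)$, and the at most $\delta n$ such terms contribute $o(1)$; for $v/n \in [dk/n+\delta,\, x_0-\delta]$, the second part of the same theorem gives exponential decay of each term, so summing over at most $n$ indices still yields $o(1)$; for $v/n \in [x_0,\, \beta)$, the present theorem provides exponential decay, and the sum is again $o(1)$. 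Combining these three regimes, the total failure probability tends to zero as $n\to\infty$, so $\beta$ is a lower bound on the achievable memory utilization.

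The main obstacle I foresee is the uniformity of the exponential gap. To pass from the pointwise strict inequality $c_5(x,\beta)<1-\epsilon$ to an exponentially small bound on a sum over up to $n$ indices, the constant $\epsilon$ must be independent of both $n$ and $x$. On any closed subinterval $[x_0,\,\beta-\delta]$ this follows from continuity of $c_5$ in $x$ together with compactness, so a pointwise gap promotes to a uniform one. The delicate point is $x\to\beta$, where several factors of $c_5$ approach their boundary values; however, since $v<m$ and hence $x\le\beta-1/n$, the sum avoids the limit, and the hypothesis (with a single $\epsilon$ valid on the whole half-open interval $[x_0,\beta)$) supplies exactly the uniformity required. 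With this uniformity in hand, the proof reduces to the union bound described above.
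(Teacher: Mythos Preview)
Your proposal is correct and matches the paper's approach: the paper's own proof is literally the single line that the theorem follows directly from the inequality $P_{bad}(x) < \mathrm{O}(1)\cdot c_5^n(x,\beta)$, which is exactly your first paragraph. Your additional elaboration on the second claim --- summing $P_{bad}$ over all $v$ and combining with the earlier small-$x$ theorem --- simply spells out the union-bound framework that the paper set up before the theorem and therefore does not repeat in the proof itself.
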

\begin{proof}
The theorem follows directly from the inequality $P_{bad}\left(x\right)<\textrm{O}(1)\cdot c_{5}^{n}\left(x,\beta\right)$. 
\end{proof}
Numerical solutions to the constraint $c_{5}\left(x,\beta\right)<1$
indicate that the memory utilization of the CUCKOO-K algorithm is
$\beta_{Choose-2}(k=2,d=2,t=n)>0.937$ and $ $$\beta_{Choose-3}(k=3,d=2,t=n)>0.993$.
Our empirical results show that $\beta_{Choose-2}(k=2,d=2,t=n)=0.9768$
and $\beta_{Choose-3}(k=3,d=2,t=n)=0.9974$. The memory utilization
for $kd>6$ rapidly approaches one. Theoretical analysis performed
by \cite{The random graph threshold for k-orientiability,The k-orientability thresholds}
provided tight thresholds of the memory utilization of the CUCKOO-DISJOINT
algorithm. $\beta_{Disjoint}\left(k=2,d=2,t=n\right)=0.8970$ and
$\beta_{Disjoint}\left(k=3,d=2,t=n\right)=0.9592$. Ref. \cite{3.5 way}
do not provide a theoretical memory utilization threshold for small
$k$. The empirical results of Ref. \cite{3.5 way} show that in the
CUCKOO-OVERLAP algorithm $\beta_{Overlap}(k=2,d=2,t=n)=0.9650$ and
$\beta_{Overlap}(k=3,d=2,t=n)=0.9945$. The theoretical analysis of
the CUCKOO-DISJOINT algorithm performed in \cite{Maximum Matchings,Tight Thresholds for Cuckoo Hashing,Orientability of Random Hypergraphs}
does not apply for $k>1$ and the theoretical analysis of the CUCKOO-DISJOINT
algorithm performed by \cite{Multiple-orientability Thresholds} does
not apply for $d<3$.

\subsection{Memory utilization when the page size $t$ is a given constant}

In this section we analyze the probability that the hashing fails
for the case where the page size $t$ equals a constant. Let $P_{fail}\left(v\right)$
be the probability that there exists a sub-graph $S$ with $v$ vertices
such that $S$ has more edges than vertices and every vertex is in
at least one edge.

Let $x_{1}=\mbox{\ensuremath{\exp\left(\frac{-\left(k+1\right)}{dk-\left(k+1\right)}\right)}}$.
Here again we divide the analysis into two sections. In section \ref{sub:x<x1 Analysis}
we show that for any memory utilization $\beta$ and $\forall\frac{dk}{n}\le x<x_{1}$,
$P_{fail}\left(x\right)$ is $\textrm{o}\left(\frac{1}{n}\right)$.
In section \ref{sub:x>x1 Analysis} we find the maximum memory utilization
$\beta$ such that $\forall x_{1}\le x<\beta$, $P_{bad}\left(x\right)$
is exponentially small.

\subsubsection{\label{sub:x<x1 Analysis}$P_{fail}\left(x\right)$ Analysis for
$\frac{dk}{n}\le x<x_{1}$}

In this section we show that for any memory utilization $\beta$ and
$\forall\frac{dk}{n}\le x<x_{1}$, $P_{fail}\left(x\right)$ is $\textrm{o}\left(\frac{1}{n}\right)$.

The analysis here is similar to the case above where $t=n$, however
now we need to take into consideration the distribution of the vertices
over the pages. 

Let $V$ be a given set of vertices, and let $v_{i}$ be the number
of vertices in page $i$. The probability that a random edge hits
$V$ is equal to $p_{hit}\left(V,t\right)=\left(\sum_{i=1}^{g}\frac{\binom{v_{i}}{k}}{g\cdot\binom{t}{k}}\right)^{d}\leq\left(\frac{1}{g}\sum_{i=1}^{g}\left(\frac{v_{i}}{t}\right)^{k}\right)^{d}$.

Let $\widetilde{p}_{hit}\left(V,t\right)=\left(\frac{1}{g}\sum_{i=1}^{g}\left(\frac{v_{i}}{t}\right)^{k}\right)^{d}$
be an upper bound on $p_{hit}\left(V,t\right)$.

We are going to use the following lemma which states that increasing
the page size reduces $\widetilde{p}_{hit}$.
\begin{lem}
\label{lem: low t is bad} For any set of vertices V and any integer
c,$\widetilde{p}_{hit}\left(V,t\right)\ge\widetilde{p}_{hit}\left(V,ct\right)$. \end{lem}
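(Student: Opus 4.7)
My plan is to unpack the definition of $\widetilde p_{hit}$ under each page size, use the fact that enlarging the page by a factor of $c$ amounts to grouping $c$ old pages together, and then reduce the inequality to convexity of $x \mapsto x^k$ (equivalently, the power-mean inequality). Since both sides of the claimed inequality are nonnegative and the outer exponent $d$ is the same on both sides, it suffices to prove the inequality at the level of the base $\frac{1}{g}\sum (v_i/t)^k$, without the $d$-th power.

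Concretely, partition the $g$ pages of size $t$ into $g/c$ blocks $I_1,\dots,I_{g/c}$, each consisting of $c$ consecutive pages. When the page size becomes $ct$, block $I_j$ is precisely one new page and the number of vertices it contains is $w_j := \sum_{i \in I_j} v_i$. Writing $g' = g/c$ for the new number of pages, the inequality $\widetilde p_{hit}(V,t)\ge \widetilde p_{hit}(V,ct)$ (with the $d$-th power stripped) becomes
\begin{equation*}
\frac{1}{g}\sum_{i=1}^{g}\Bigl(\frac{v_i}{t}\Bigr)^{k} \;\ge\; \frac{1}{g'}\sum_{j=1}^{g'}\Bigl(\frac{w_j}{ct}\Bigr)^{k},
\end{equation*}
which after clearing the common factor $1/(g t^k)$ is equivalent to
\begin{equation*}
c^{k-1}\sum_{i=1}^{g} v_i^{k} \;\ge\; \sum_{j=1}^{g'} w_j^{k} \;=\; \sum_{j=1}^{g'}\Bigl(\sum_{i\in I_j} v_i\Bigr)^{k}.
\end{equation*}

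For the final step, I invoke convexity of $x \mapsto x^k$ on $[0,\infty)$ (valid since $k\ge 1$), or equivalently Jensen's inequality applied within each block $I_j$:
\begin{equation*}
\Bigl(\frac{1}{c}\sum_{i\in I_j} v_i\Bigr)^{k} \;\le\; \frac{1}{c}\sum_{i\in I_j} v_i^{k},
\end{equation*}
which upon multiplying by $c^k$ yields $\bigl(\sum_{i\in I_j} v_i\bigr)^k \le c^{k-1}\sum_{i\in I_j} v_i^k$. Summing over $j$ and using that the $I_j$ partition $\{1,\dots,g\}$ gives exactly the required bound. Raising both sides of the base inequality to the $d$-th power restores $\widetilde p_{hit}$ and completes the proof.

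The main obstacle here is essentially bookkeeping: one must be careful that the assumption ``$k$ divides $t$ and $t$ divides $n$'' combined with $c$ being an integer is what makes the clean grouping of pages legitimate, so the hypothesis that $c$ is a (positive) integer divisor compatible with $g$ is implicitly used. No deep estimate is needed; the content is the convexity inequality. The only thing worth flagging is that the original definition replaces the exact ratio $\binom{v_i}{k}/\binom{t}{k}$ by the upper bound $(v_i/t)^k$, and it is this upper bound that is monotone under page merging -- the exact $p_{hit}$ need not satisfy such a clean monotonicity, which is why the lemma is stated for $\widetilde p_{hit}$ rather than $p_{hit}$.
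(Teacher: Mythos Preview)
Your proof is correct and follows essentially the same approach as the paper: both arguments group $c$ old pages into a single new page and then apply convexity of $x\mapsto x^{k}$ (Jensen's inequality) block by block to conclude that the averaged quantity cannot increase. Your write-up is more explicit about the bookkeeping (stripping the $d$-th power, clearing the common factor, summing the per-block inequalities), but the mathematical content is identical to the paper's one-line invocation of convexity.
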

\begin{proof}
Since the function $f\left(v\right)=\left(\frac{v}{t}\right)^{k}$
is convex, for any sequence of $c$ pages, $\frac{1}{c}\left(\left(\frac{v_{1}}{t}\right)^{k}+\ldots+\left(\frac{v_{c}}{t}\right)^{k}\right)\ge\left(\frac{v_{1}+\ldots+v_{c}}{ct}\right)^{k}$,
thus multiplying the page size by a factor $c$ does not increase
$\widetilde{p}_{hit}$. For convenience, we restrict our analysis
to pages of size $t=k\cdot c$, where c is any integer. The worst
case is obtained when $t=k$ which is equivalent to the classical
CUCKOO-DISJOINT hashing.

We will now analyze $P_{fail}\left(x\right)$ and $P_{bad}\left(x\right)$
as $n\rightarrow\infty$. Recall that $\beta=\frac{m}{n}$ $ $ is
the memory utilization, $0<\beta\leq1$ and $x=\frac{v}{n}$. \end{proof}
\begin{lem}
\label{lem:Pv<c6c7}$P_{fail}\left(x\right)<c_{6}\left(x\right)\cdot c_{7}^{n}\left(x\right)$ 

where $c_{6}\left(x\right)=\mathrm{e}x^{d-1}$ and $c_{7}\left(x\right)=\mathrm{e}^{\frac{\left(k+1\right)x}{k}}x^{\frac{\left(dk-1-k\right)x}{k}}$.\end{lem}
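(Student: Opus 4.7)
The plan is to adapt the $t=n$ argument (as in Lemma~\ref{lem:Pv<c0c1}) and exploit the page structure through Lemma~\ref{lem: low t is bad}. First I would use that lemma to replace $\widetilde{p}_{hit}(V,t)$ by $\widetilde{p}_{hit}(V,k)$, reducing to the worst case $t=k$, which is simply CUCKOO-DISJOINT. The key feature of this regime is that every bucket now coincides with an entire page, so the collection of vertex sets that can support a failing sub-graph is far more constrained than in the $t=n$ analysis.

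The structural observation driving the proof is the following: when $t=k$, any edge whose $d$ buckets all lie in $V$ has each bucket equal to a whole page contained in $V$. Since the definition of $P_{fail}$ requires every vertex of $V$ to be contained in at least one edge of the sub-graph, each vertex of $V$ must belong to a page that is fully inside $V$. Consequently $V$ itself is a disjoint union of complete pages: $v$ is a multiple of $k$, the number of admissible vertex sets of size $v$ shrinks from $\binom{n}{v}$ to $\binom{g}{v/k}$, and for each such $V$ one has $\widetilde{p}_{hit}(V,k)=x^d$.

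Applying the union bound then gives
\[
P_{fail}(v) \;\leq\; \binom{g}{v/k}\binom{m}{v+1}\,x^{d(v+1)}.
\]
From this inequality I would invoke the standard estimate $\binom{a}{b}\leq(ea/b)^b$ on both binomial coefficients, substitute $m\leq n$, $v=xn$, and $g=n/k$, and collect powers of $e$ and $x$. A direct manipulation regroups the right-hand side as $e\,x^{d-1}\bigl(e^{(k+1)x/k}\,x^{(dk-1-k)x/k}\bigr)^n$, which is precisely $c_6(x)\,c_7^n(x)$.

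I expect the main obstacle to be the justification for restricting $V$ to unions of complete pages. A naive union bound over all $\binom{n}{v}$ subsets together with the uniform estimate $\widetilde{p}_{hit}\leq x^d$ would only produce an $e^{2x}x^{(d-2)x}$-type factor, missing the $1/k$ savings and too weak for the threshold analysis in Section~\ref{sub:x>x1 Analysis}. Threading the ``every vertex lies in an edge'' condition through the $t=k$ reduction is precisely what produces the per-page $1/k$ factor appearing in the exponents of $c_7$.
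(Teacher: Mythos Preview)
Your proposal matches the paper's proof essentially step for step: reduce to the worst case $t=k$ via Lemma~\ref{lem: low t is bad}, use the ``every vertex lies in an edge'' condition to force $V$ to be a union of complete $k$-pages (so the count of relevant vertex sets is $\binom{n/k}{v/k}$ rather than $\binom{n}{v}$), take $\widetilde{p}_{hit}=x^{d}$, and then apply $\binom{a}{b}\le(ea/b)^b$ to both binomial factors and regroup. Your identification of the key obstacle---that one must earn the $\binom{n/k}{v/k}$ count, not merely the uniform bound $\widetilde p_{hit}\le x^d$, to obtain the $1/k$ in the exponent of $c_7$---is exactly the point the paper leans on when it remarks that for $t=k$ ``an element can be placed either in all of the locations of a page or in none of them.''
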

\begin{proof}
See appendix \ref{sec:Proof-of Pv<c6c7}.\end{proof}
\begin{thm}
$\mbox{Let}$ $\delta$ be a small constant, $0<\delta\ll$1. If $\left(d-1\right)dk\ge3$,
then for any load $\beta$:

1. If $\frac{dk}{n}\le x\le\frac{dk}{n}+\delta$, then $P_{fail}\left(x\right)<\textrm{O\ensuremath{\left(n^{-\left(\left(d-1\right)dk-1\right)}\right)}}=\textrm{o}\left(\frac{1}{n}\right)$
. 

2. If $\frac{dk}{n}+\delta\le x\le x_{1}-\delta$, then $P_{fail}\left(x\right)$
decreases exponentially as $n\rightarrow\infty$.\end{thm}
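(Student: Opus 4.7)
The plan is to mirror the structure of the preceding theorem (for the case $t = n$), with the pair $(c_{6}(x),c_{7}(x))$ from Lemma~\ref{lem:Pv<c6c7} playing the role that $(c_{0}(x),c_{1}(x))$ played there. Starting from $P_{fail}(x)<c_{6}(x)\,c_{7}^{n}(x)$, the two parts of the theorem reduce respectively to a sign analysis of $\ln c_{7}$ on $(0,x_{1})$ and to a direct evaluation of the bound at $x=dk/n$.

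For part~2, I would show that $c_{7}(x)<1$ strictly on the open interval $(0,x_{1})$, so that by continuity and compactness $c_{7}(x)\le 1-\epsilon$ uniformly on the closed sub-interval $[dk/n+\delta,\,x_{1}-\delta]$ for all sufficiently large $n$; combined with $c_{6}(x)=O(1)$, this yields the claimed exponential decay of $P_{fail}$. Taking logarithms factors cleanly:
\[
\ln c_{7}(x)\;=\;\frac{x}{k}\Bigl[(k+1)+(dk-k-1)\ln x\Bigr].
\]
The bracketed expression is strictly increasing in $x$ (since the hypothesis $(d-1)dk\ge 3$ forces $dk-k-1\ge 1>0$) and, by the very definition of $x_{1}=\exp\!\bigl(-(k+1)/(dk-(k+1))\bigr)$, vanishes precisely at $x=x_{1}$. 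Hence it is strictly negative on $(0,x_{1})$, and multiplying by the positive factor $x/k$ preserves the sign.

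For part~1, I would follow the shortcut used in the preceding proof and evaluate the upper bound at $x=dk/n$. A direct substitution into $c_{6}(x)\,c_{7}^{n}(x)$ gives
\[
\mathrm{e}^{\,1+d(k+1)}\cdot(dk)^{\,(d-1)+d(dk-k-1)}\cdot n^{\,-\bigl((d-1)+d(dk-k-1)\bigr)},
\]
and the exponent on $n$ simplifies as $(d-1)+d(dk-k-1)=d^{2}k-dk-1=(d-1)dk-1$. Thus $P_{fail}(dk/n)=O\bigl(n^{-((d-1)dk-1)}\bigr)$, and the same order of magnitude persists over the short window $x\in[dk/n,\,dk/n+\delta]$ since $c_{7}(x)<1$ there keeps $c_{7}^{n}$ no larger (to leading order) than at the left endpoint. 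Finally, $(d-1)dk\ge 3$ forces the exponent on $n$ to be at most $-2$, so the bound is indeed $\mathrm{o}(1/n)$.

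The only obstacle I anticipate is formalising the uniform bound $c_{7}(x)\le 1-\epsilon$ on the closed interval $[dk/n+\delta,\,x_{1}-\delta]$, but the explicit factorisation of $\ln c_{7}$ above pins down its unique zero at $x_{1}$ and its sign on either side, so this step reduces to a one-line calculus check. The remainder of the argument is a direct transcription of the previous theorem's proof with the new bounding functions substituted in.
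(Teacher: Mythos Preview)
Your proposal is correct and follows essentially the same approach as the paper: invoke Lemma~\ref{lem:Pv<c6c7} to get $P_{fail}(x)<c_{6}(x)\,c_{7}^{n}(x)$, evaluate at $x=dk/n$ to obtain the polynomial bound $O\bigl(n^{-((d-1)dk-1)}\bigr)$ for part~1, and use $c_{7}(x)<1-\epsilon$ on $[dk/n+\delta,\,x_{1}-\delta]$ for the exponential decay in part~2. Your explicit factorisation $\ln c_{7}(x)=\frac{x}{k}\bigl[(k+1)+(dk-k-1)\ln x\bigr]$ and the verification that $(d-1)dk\ge 3$ forces $dk-k-1\ge 1$ are details the paper leaves implicit, but the structure is identical.
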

\begin{proof}
For any memory utilization $\beta$, $P_{fail}\left(x\right)<c_{6}\left(x\right)\cdot c_{7}^{n}\left(x\right)$ 

where $c_{6}\left(x\right)=\mathrm{e}x^{d-1}$ and $c_{7}\left(x\right)=\mathrm{e}^{\frac{\left(k+1\right)x}{k}}x^{\frac{\left(dk-1-k\right)x}{k}}$.

Note that $c_{6}\left(x\right)$ and $c_{7}\left(x\right)$ are independent
of $\beta$. Recall that $x_{1}=\mbox{\ensuremath{\exp\left(\frac{-\left(k+1\right)}{dk-\left(k+1\right)}\right)}}$.
$\forall\frac{dk}{n}+\delta\le x\le x_{1}-\delta$, there exist a
constant $\epsilon$ such that $c_{7}\left(x\right)<1-\epsilon$ .
We obtain that: 

1. If $x\rightarrow\frac{dk}{n}$, then $P_{fail}\left(x\right)<\textrm{lim}{}_{x\rightarrow\frac{dk}{n}}c_{6}\left(x\right)\cdot c_{7}^{n}\left(x\right)=\textrm{O}\left(n^{-\left(\left(d-1\right)dk-1\right)}\right)\le\textrm{O}\left(n^{-2}\right)=\textrm{o}\left(\frac{1}{n}\right)$
.

2. If $\frac{dk}{n}+\delta\leq x<x_{1}-\delta$, then $c_{7}\left(x\right)<1-\epsilon$,
and $P_{fail}\left(x\right)<c_{6}\left(x\right)\cdot c_{7}^{n}\left(x\right)$
decreases exponentially as $n\rightarrow\infty$.
\end{proof}

\subsubsection{\label{sub:x>x1 Analysis}$P_{bad}\left(x\right)$ Analysis for $x_{1}\le x<\beta$}

In this section we find the maximum memory utilization $\beta$ such
that $\forall x_{1}<x<\beta$, the probability that there exists a
bad sub-graph is exponentially small. We examine the set of sub-graphs
that have a given distribution $\underline{\text{a}}$ of vertices
over the pages. $\underline{\text{a}}=\left(a_{0},...,a_{t}\right)$,
where $a_{i}$ is the number of pages that have $i$ vertices. For
example, when the page size $t$ was equal to $n$ and the number
of pages $g$ was 1, the number of sub-graphs with $v$ vertices was
$\binom{n}{v}$ and the corresponding $\underline{\text{a}}$ of those
sub-graphs was $\left\{ 0,...,0,a_{v}=1,0,...,0\right\} $. by definition
of $\underline{\text{a}}$,

\begin{equation}
\sum_{i=0}^{t}a_{i}=g\end{equation}

\begin{lem}
When the page size $t$ is a given constant, the number of different
possible values of $\underline{\text{a}}$ is polynomial in $n$.\end{lem}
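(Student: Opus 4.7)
The plan is to observe that a valid distribution vector $\underline{\text{a}} = (a_0, a_1, \ldots, a_t)$ is simply a $(t+1)$-tuple of non-negative integers constrained to sum to $g = n/t$. Since $t$ is a constant independent of $n$, a direct combinatorial count should suffice; no probabilistic or analytic machinery is needed.

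First I would invoke the stars-and-bars identity: the number of non-negative integer solutions of $a_0 + a_1 + \cdots + a_t = g$ is exactly $\binom{g+t}{t}$. Next I would bound this by $(g+t)^t/t!$, which is $O(g^t)$, and then substitute $g = n/t$ to obtain $O(n^t)$. Since $t$ is a fixed constant, $O(n^t)$ is polynomial in $n$, which is exactly the claim. A cruder bound, observing that each $a_i \in \{0, 1, \ldots, g\}$ so there are at most $(g+1)^{t+1}$ vectors before imposing the sum constraint, works equally well and may be a touch cleaner to present.

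There is essentially no obstacle in this argument; the only point worth flagging is that the bound $\binom{g+t}{t} = O(n^t)$ crucially absorbs $t$ into the implicit constant, so the hypothesis that $t$ is a constant is used in a nontrivial way. If $t$ were allowed to grow with $n$ (for instance the regime $t = n$ studied in the previous subsection), the count would no longer be polynomial, which is consistent with why that case required a different treatment. This lemma will presumably be used in the next subsection to justify a union bound over distributions $\underline{\text{a}}$: since the count is polynomial, a per-distribution failure probability that decays super-polynomially will still yield an $o(1)$ overall failure probability.
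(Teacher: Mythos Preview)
Your proposal is correct and follows essentially the same approach as the paper: the paper simply bounds the count by $g^{t} = (n/t)^{t}$ and notes that this is polynomial because $t$ is constant, which is your ``cruder bound'' option. Your additional stars-and-bars exact count $\binom{g+t}{t}$ is a refinement but not a different idea, and your remark about the role of the constancy of $t$ and the downstream use in a union bound over distributions is exactly how the paper proceeds.
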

\begin{proof}
We denote by $\#\underline{\text{a}}$ The number of different possible
values of $\underline{\text{a}}$.

$\#\underline{\text{a}}<g^{t}=\left(\frac{n}{t}\right)^{t}$. Since
$t$ is constant, $\#a$ is polynomial in $n$.
\end{proof}
Let $P_{bad}\left(\underline{\text{a}}\right)$ be the probability
that there exists a bad sub-graph with distribution $\underline{\text{a}}$
of vertices over the pages. If $P_{bad}\left(\underline{\text{a}}\right)$
is exponentially small for every $\underline{\text{a}}$, then the
union bound over a polynomial number of all possible values of $a$
is also exponentially small. Let $p_{bad}\left(\underline{\text{a}}\right)$
be the probability that \emph{a given }sub-graph $S\left(V,E\right)$
with $\underline{\text{a}}=\left(a_{0},...,a_{t}\right)$ is bad.

Let $\hat{a}=\frac{\underline{a}}{g}$ be a unit vector. When the
page size $t$ is a constant, the probability that a random edge hits
$V$ is:

\begin{equation}
p_{hit}\left(\underline{\text{a}}\right)=\left(\frac{\sum_{i=k}^{t}a_{i}\binom{i}{k}}{g\binom{t}{k}}\right)^{d}=\left(\frac{\sum_{i=k}^{t}\hat{a}_{i}\binom{i}{k}}{\binom{t}{k}}\right)^{d}\end{equation}
 and the probability that a random edge connects $V$ to exactly one
vertex from outside of V is

\begin{equation}
p_{1}\left(\underline{\text{a}}\right)=d\left(\frac{\sum_{i=k}^{t}a_{i}\left(t-i\right)\binom{i}{k-1}}{g\binom{t}{k}}\right)\left(\frac{\sum_{i=k}^{t}a_{i}\binom{i}{k}}{g\binom{t}{k}}\right)^{d-1}=d\left(\frac{\sum_{i=k}^{t}\hat{a}_{i}\left(t-i\right)\binom{i}{k-1}}{\binom{t}{k}}\right)\left(\frac{\sum_{i=k}^{t}\hat{a}_{i}\binom{i}{k}}{\binom{t}{k}}\right)^{d-1}\end{equation}
 Using the union bound we obtain that

\begin{equation}
P_{bad}\left(\underline{\text{a}}\right)<N\left(\underline{\text{a}}\right)\cdot p_{bad}\left(\underline{\text{a}}\right)\end{equation}
 where $N\left(\underline{\text{a}}\right)$ is the number of sub-graphs
with $\underline{\text{a}}=\left(a_{0},...,a_{t}\right)$.

\begin{equation}
N\left(\underline{\text{a}}\right)=\binom{g}{a_{0},...,a_{t}}\prod_{i=0}^{t}\binom{t}{_{i}}^{a_{i}}\end{equation}
For the asymptotic behavior of $N\left(\underline{\text{a}}\right)$,
we are going to use the following lemma:
\begin{lem}
\label{lem:multinom_bound}$\binom{g}{a_{0},...,a_{t}}\leq\prod_{i=0}^{t}\left(\frac{g}{a_{i}}\right)^{a_{i}}$\end{lem}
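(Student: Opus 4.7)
The plan is to derive the inequality as a one-line consequence of the multinomial theorem evaluated at a well-chosen probability vector. The identity I will use is
\[
(x_0 + x_1 + \cdots + x_t)^g \;=\; \sum_{b_0+\cdots+b_t=g} \binom{g}{b_0,\ldots,b_t} \prod_{i=0}^{t} x_i^{b_i},
\]
valid for any nonnegative reals $x_0,\ldots,x_t$, and I will specialize it to $x_i = a_i/g$.

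First, since $\sum_{i=0}^{t} a_i = g$ by the definition of $\underline{\text{a}}$, the choice $x_i = a_i/g$ gives $\sum_i x_i = 1$, so the left-hand side collapses to $1$. Second, every summand on the right-hand side is nonnegative, so the full sum is bounded below by the single term indexed by $b_i = a_i$, namely $\binom{g}{a_0,\ldots,a_t}\prod_{i=0}^{t} (a_i/g)^{a_i}$. Rearranging the resulting inequality $1 \ge \binom{g}{a_0,\ldots,a_t}\prod_i (a_i/g)^{a_i}$ yields $\binom{g}{a_0,\ldots,a_t}\le\prod_i (g/a_i)^{a_i}$, exactly the claimed bound.

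The only mild subtlety is that some $a_i$ may be zero, in which case $(g/a_i)^{a_i}$ and $(a_i/g)^{a_i}$ should be read under the convention $0^0 = 1$; those indices then contribute a factor of $1$ to both sides and the argument goes through unchanged. I do not anticipate any real obstacle here: this is a textbook entropy-style bound on multinomial coefficients, and the proof is essentially a single substitution into the multinomial theorem.
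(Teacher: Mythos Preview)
Your proof is correct and is essentially identical to the paper's own argument: both specialize the multinomial theorem at $x_i=a_i/g$, observe that the left side becomes $1$, and bound the sum from below by the single term with index $(a_0,\ldots,a_t)$ to obtain the inequality. Your explicit remark on the $0^0=1$ convention is a nice touch that the paper leaves implicit.
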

\begin{proof}
See Appendix \ref{sec:Proof-of-Lemma multinom_bound}. \end{proof}
\begin{lem}
\label{lem:Pv<c8c9}$P_{bad}\left(\hat{a},\beta\right)<\textrm{O}(1)\cdot c_{8}\left(\hat{a},\beta\right)\cdot c_{9}^{n}\left(\hat{a},\beta\right)$
where

$c_{8}\left(\hat{a},\beta\right)=p_{hit}\left(1-p_{1}-p_{hit}\right)^{-1}\left(\frac{\beta-x}{x}\right)$
and

$c_{9}\left(\hat{a},\beta\right)=\prod_{i=0}^{t}\left(\frac{1}{\hat{a}_{i}}\right)^{\frac{\hat{a}_{i}}{t}}\prod_{i=0}^{t}\binom{t}{_{i}}^{\frac{\hat{a}_{i}}{t}}\left(\frac{\beta}{\beta-x}\right)^{\left(\beta-x\right)}\left(\frac{\beta}{x}\right)^{x}p_{hit}^{x}\left(1-p_{1}-p_{hit}\right)^{\beta-x}$\end{lem}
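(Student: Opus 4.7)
The plan is to combine the union bound $P_{bad}(\underline{a}) < N(\underline{a}) \cdot p_{bad}(\underline{a})$ with the explicit formulas for both factors, then split the resulting expression into a polynomial-in-$n$ prefactor (absorbed into $O(1)\cdot c_{8}$) and a genuinely exponential-in-$n$ part ($c_{9}^{n}$). The first lemma of the $t=n$ analysis carries over verbatim with the new $\underline{\text{a}}$-dependent probabilities, giving
\[
p_{bad}(\underline{\text{a}}) = \binom{m}{v+1}\,p_{hit}^{\,v+1}\,(1-p_{1}-p_{hit})^{\,m-v-1}.
\]

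First, I would bound the multinomial coefficient in $N(\underline{\text{a}})$ via Lemma~\ref{lem:multinom_bound}, yielding
\[
N(\underline{\text{a}}) \le \prod_{i=0}^{t}\left(\tfrac{1}{\hat a_{i}}\right)^{a_{i}}\prod_{i=0}^{t}\binom{t}{i}^{a_{i}}.
\]
Since $a_{i}=g\hat a_{i}=(n/t)\hat a_{i}$, each product is already an $n$-th power and supplies the two leading factors $\prod_{i}(1/\hat a_{i})^{\hat a_{i}/t}$ and $\prod_{i}\binom{t}{i}^{\hat a_{i}/t}$ of $c_{9}$.

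Next, I would rewrite $\binom{m}{v+1}=\tfrac{m-v}{v+1}\binom{m}{v}$ and apply Stirling's formula,
\[
\binom{m}{v}=\Theta\!\left(n^{-1/2}\right)\left(\tfrac{m}{v}\right)^{v}\!\left(\tfrac{m}{m-v}\right)^{m-v}.
\]
Substituting $m=\beta n$ and $v=xn$ converts the Stirling exponentials into $\left[(\beta/x)^{x}(\beta/(\beta-x))^{\beta-x}\right]^{n}$, another block of $c_{9}^{n}$; the $\Theta(n^{-1/2})$ correction is absorbed into the $O(1)$ prefactor, while the surviving ratio $(m-v)/(v+1)\to(\beta-x)/x$ becomes the corresponding factor of $c_{8}$.

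Finally, I would peel the ``$+1$'' off the two remaining exponents: $p_{hit}^{\,v+1}=p_{hit}\cdot(p_{hit}^{x})^{n}$ and $(1-p_{1}-p_{hit})^{m-v-1}=(1-p_{1}-p_{hit})^{-1}\cdot(1-p_{1}-p_{hit})^{(\beta-x)n}$ contribute the $p_{hit}$ and $(1-p_{1}-p_{hit})^{-1}$ factors of $c_{8}$, while their $n$-th power parts slot into $c_{9}^{n}$. Assembling everything gives the stated inequality. The main delicate point is the Stirling bookkeeping: one must check that the polynomial corrections remain uniform enough across the relevant range $x_{1}\le x<\beta$ to be swallowed into a single $O(1)$, and one should adopt the convention $0^{0}=1$ for pages where $\hat a_{i}=0$ so that the displayed products are well-defined.
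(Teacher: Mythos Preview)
Your proposal is correct and follows essentially the same path as the paper: union bound, Lemma~\ref{lem:multinom_bound} for $N(\underline{a})$, then splitting exponents into an $n$-th power and a leftover constant factor. The only cosmetic difference is the treatment of $\binom{m}{v+1}$: the paper applies the multinomial-type inequality $\binom{m}{v+1}<\bigl(\tfrac{m}{m-v-1}\bigr)^{m-v-1}\bigl(\tfrac{m}{v+1}\bigr)^{v+1}$ directly and reads off exponents $(\beta-x)n-1$ and $xn+1$, whereas you factor out $(m-v)/(v+1)$ and invoke Stirling on $\binom{m}{v}$; both routes produce the same $(\beta-x)/x$ contribution to $c_{8}$ and the same blocks of $c_{9}^{n}$.
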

\begin{proof}
See appendix \ref{sec:Proof-of-Pv<c8c9}.\end{proof}
\begin{thm}
If $ $$c_{9}\left(\hat{a},\beta\right)<1-\epsilon$ for all $\forall\, x_{1}\le x<\beta$,
then $P_{bad}\left(\hat{a},\beta\right)$ decreases exponentially
as $n\rightarrow\infty$ . Any memory utilization $\beta$ that satisfies
the constraint is a lower bound on the possible memory utilization.\end{thm}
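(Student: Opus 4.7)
The plan is to derive both claims directly from Lemma \ref{lem:Pv<c8c9}, combined with bookkeeping on the union bound. First I would argue the exponential decay statement: Lemma \ref{lem:Pv<c8c9} supplies
\[
P_{bad}(\hat{a},\beta) < O(1)\cdot c_{8}(\hat{a},\beta)\cdot c_{9}^{n}(\hat{a},\beta),
\]
where the prefactor $c_{8}(\hat{a},\beta)$ depends only on $\hat{a}$, $\beta$ and $x$ through quantities that are bounded independently of $n$. Consequently the only $n$-dependence lives in $c_{9}^{n}$, and under the hypothesis $c_{9}(\hat{a},\beta)<1-\epsilon$ we immediately obtain $P_{bad}(\hat{a},\beta)\le O(1)\cdot(1-\epsilon)^{n}$, which is exponentially small in $n$. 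This settles the first sentence of the theorem.

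For the lower-bound statement, I would convert the per-distribution estimate into a bound on the total failure probability via a two-level union bound. At the inner level, for each fixed $v$ (equivalently each $x=v/n$), I would sum over all distributions $\underline{\text{a}}$ compatible with that $v$. The preceding lemma bounds the number of such distributions by $g^{t}=(n/t)^{t}$, which is polynomial in $n$ because $t$ is a fixed constant. Multiplying a polynomial in $n$ by an exponentially decaying quantity still yields an exponentially small quantity, so $\sum_{\underline{\text{a}}} P_{bad}(\hat{a},\beta)$ is $o(1/n)$ for every $x$ in the range $x_{1}\le x<\beta$. At the outer level I would then sum over the at most $n$ integer values of $v$, which preserves the $o(1)$ overall bound.

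To finish, I would glue this range to the complementary range handled in Section \ref{sub:x<x1 Analysis}: for $\frac{dk}{n}\le x<x_{1}$ the probability $P_{fail}(x)$ is already $o(1/n)$, so the union bound over all $v$ with $dk\le v<m$ shows that the total probability that any bad sub-graph exists is $o(1)$. Therefore, for any $\beta$ satisfying $c_{9}(\hat{a},\beta)<1-\epsilon$ uniformly on $x_{1}\le x<\beta$, the cuckoo hash succeeds with probability $1-o(1)$ as $n\rightarrow\infty$, which is exactly what it means for $\beta$ to be a valid lower bound on the achievable memory utilization.

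I do not expect a genuine obstacle: the content of the theorem is essentially repackaging Lemma \ref{lem:Pv<c8c9} together with the polynomial count of distributions $\underline{\text{a}}$. The only point to handle carefully is verifying that $c_{8}(\hat{a},\beta)$ is uniformly $O(1)$ over the feasible range of $\hat{a}$ and $x$, so that the polynomial blow-up from the union bound is dominated by the exponential decay of $c_{9}^{n}$; this follows because the factors making up $c_{8}$ (namely $p_{hit}$, $(1-p_{1}-p_{hit})^{-1}$, and $(\beta-x)/x$) are all bounded in the regime under consideration.
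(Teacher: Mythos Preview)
Your proposal is correct and follows essentially the same approach as the paper: the paper's own proof is a single sentence invoking Lemma~\ref{lem:Pv<c8c9}, relying on the surrounding discussion (the polynomial count of distributions $\underline{a}$ and the small-$x$ analysis of Section~\ref{sub:x<x1 Analysis}) for the lower-bound claim. You have simply made explicit the union-bound bookkeeping and the gluing with the $x<x_{1}$ range that the paper leaves implicit in the text preceding the theorem.
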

\begin{proof}
The theorem follows directly from the inequality $P_{bad}\left(\hat{a},\beta\right)<\textrm{O}(1)\cdot c_{8}\left(\hat{a},\beta\right)\cdot c_{9}^{n}\left(\hat{a},\beta\right)$.
\end{proof}
Theoretical lower bounds of the memory utilization obtained from numerical
solutions of the constraint $c_{9}\left(\hat{a},\beta\right)<1$ are
displayed in figure \ref{fig:Memory-utilization Vs. page size}.

\section{Empirical Results}

\begin{figure}
\includegraphics[scale=0.5]{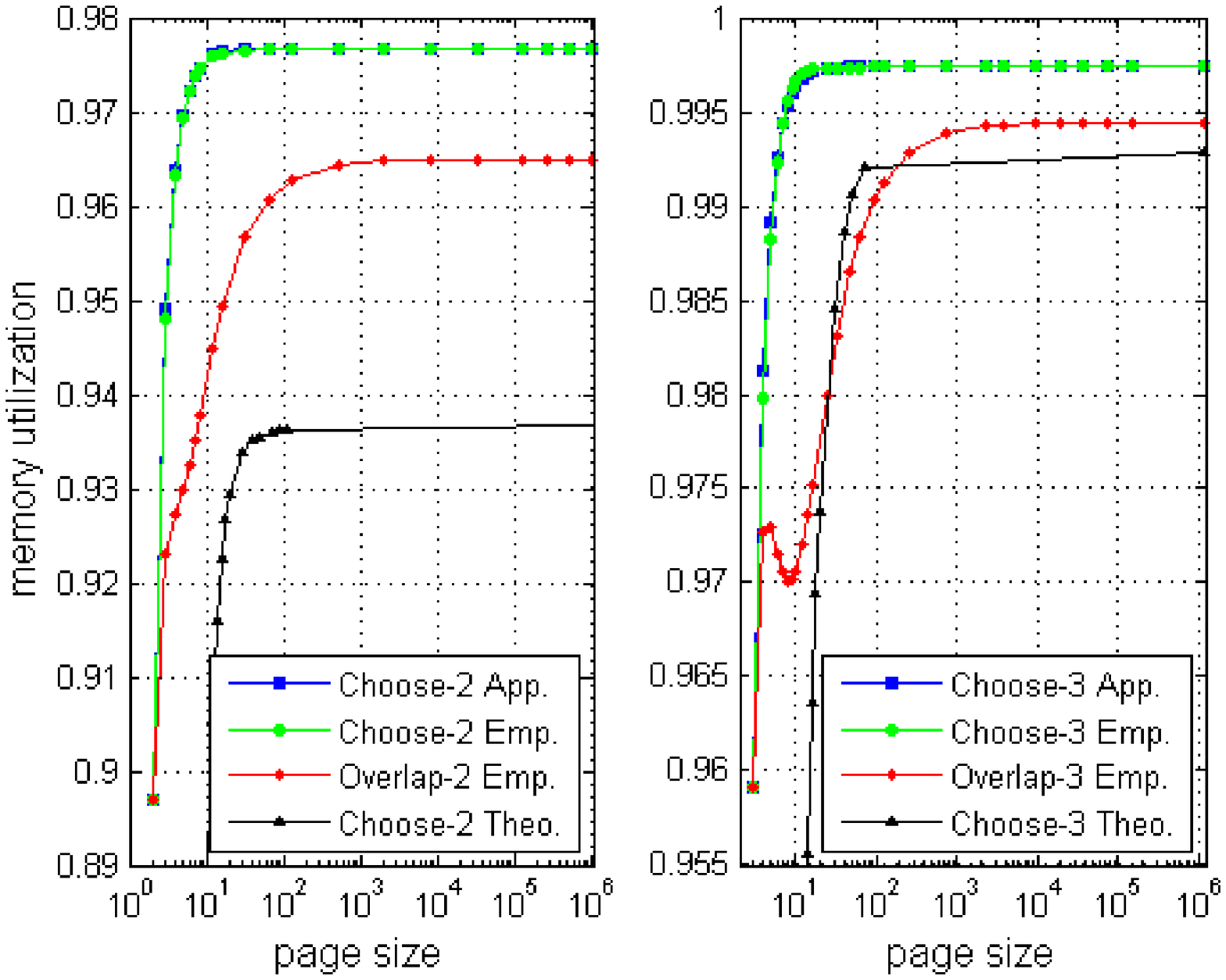}

\caption{\label{fig:Memory-utilization Vs. page size}}

Memory utilization vs. page size. Empirical CUCKOO-CHOOSE-K (green),
Empirical CUCKOO-OVERLAP (red), approximation formula of Empirical
CUCKOO-CHOOSE-K (blue), and theoretical lower bound of CUCKOO-CHOOSE-K
(black). left: $k=2$, right: $k=3$.$ $ 
\end{figure}

The experiments were conducted with a similar protocol to the one
described in \cite{3.5 way}. In all experiments the number of the
buckets, $d$, was two. The capacity of each bucket $k$ was either
two or three. The size of the hash tables $n$ was $1,209,600$. The
reported memory utilization $\beta$ is the mean memory utilization
over twenty trials. The random hash functions were based on the Matlab
{}``rand'' function with the twister method. Items were inserted
into the hash table one-by-one until an item could not be inserted.
The results of both CUCKOO-CHOOSE-K and CUCKOO-OVERLAP were notably
stable. In each case, the standard deviation was a few hundredths
of a percent, so error bars would be invisible in the figure. Such
strongly predictable behavior is appealing from a practical standpoint.
Since we added a paging constraint, our results are not comparable
to previous works that do not include a paging constraint.

\begin{figure}
\includegraphics[scale=0.4]{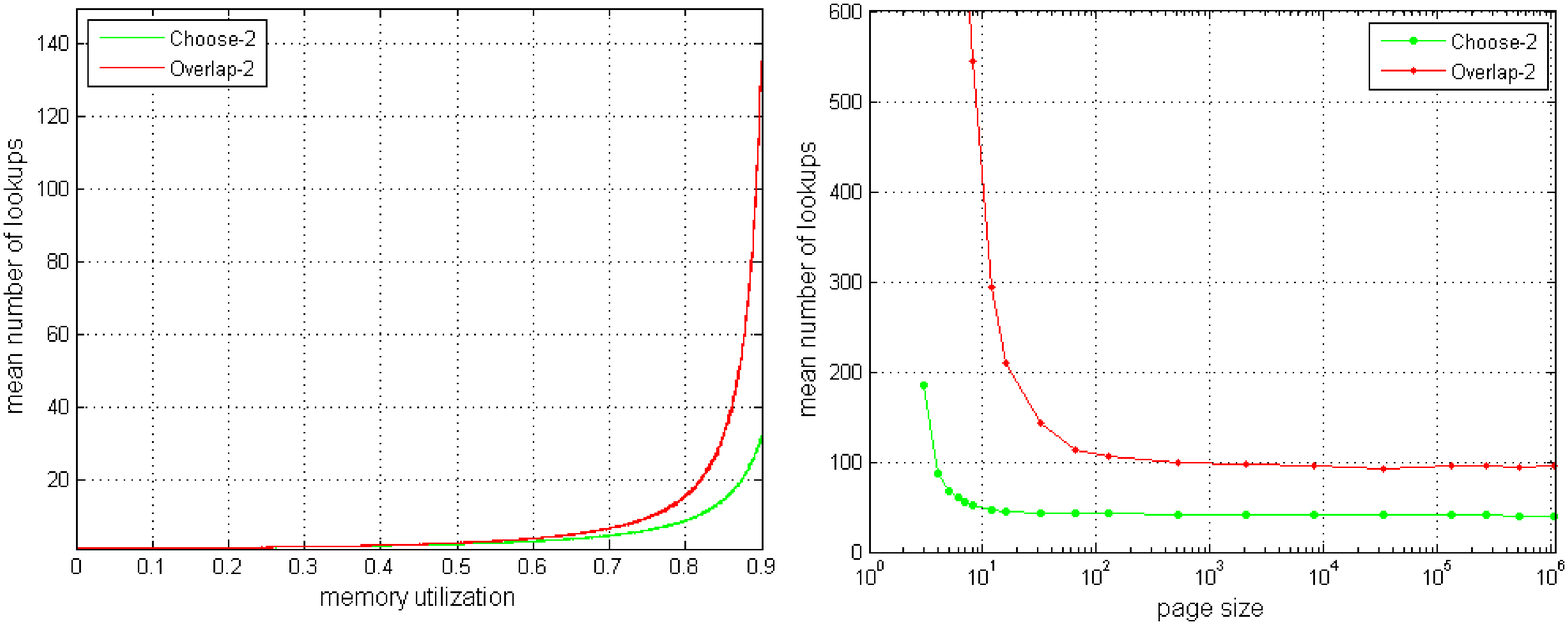}\caption{\label{fig:Number-of-iteration}}

Number of lookups required to insert an item vs. memory utilization
(left) and vs. page size (right). In the left figure the page size
is 8. In the right figure the memory utilization is 92\%. The left
figure was smoothed with an averaging filter. The variance in the
number of lookups required to insert an item was much smallar in CUCKOO-CHOOSE-K.
\end{figure}

Experiments show that CUCKOO-CHOOSE-K improves memory utilization
significantly even for a small page size $t$ and a small bucket capacity
$k$ when compared to the classical cuckoo hashing CUCKOO-DISJOINT.
It outperforms CUCKOO-OVERLAP as well. Recall that CUCKOO-DISJOINT
is the extreme case of CUCKOO-CHOOSE-K and CUCKOO-OVERLAP when the
page size is equal to bucket size $k$. The memory utilization $\beta_{Choose-k}$
converges very quickly to its maximum value. For example $\beta_{Choose-2}(t=16)=0.9763$
is almost equal to $\beta_{Choose-2}(t=1,209,600)=0.9767$. Whereas
$\beta_{Overlap-2}\left(t=16\right)=0.9494$ and $\beta_{Disjoint-2}=0.8970$.

The empirical memory utilization can be approximated very accurately
by the following formulas:

\begin{equation}
\beta_{Choose-2}\left(t\right)\approx0.977\cdot\left(1-(\frac{t}{0.764})^{-2.604}\right)\end{equation}

\begin{equation}
\beta_{Choose-3}\left(t\right)\approx0.997\cdot\left(1-(\frac{t}{1.011})^{-2.998}\right)\end{equation}

The maximum approximation error is 0.0011 for $k=2$ and 0.0015 for
$k=3$.

The empirical results and their approximations are displayed in figure
\ref{fig:Memory-utilization Vs. page size} together with empirical
results of CUCKOO-OVERLAP. The memory utilization for $kd>6$ rapidly
approaches one (not displayed).

CUCKOO-CHOOSE-K outperforms CUCKOO-OVERLAP not only in memory utilization,
but in the number of iterations required to insert a new item as well.
Figure \ref{fig:Number-of-iteration} illustrates the number of iterations
required to insert a new item when the hash table is 92\% full and
we use breadth first search to search for a vacant position. $\#itr_{Choose-2}(t=8)=$52,
whereas $\#itr_{Overlap-2}(t=8)=545$. Note that in these simulations
we did not limit the number of insert iterations and we continued
to insert items as long as we could find free locations. Most applications
limit the number of inserted iterations, and maintain a low memory
utilization in order to find a free location easily. If an empty position
is not found within a fixed number of iterations, a rehash is performed
or the item is placed outside of the cuckoo array.

\appendix
%dummy comment inserted by tex2lyx to ensure that this paragraph is not empty

\section{\label{sec:Proof-of-Pv<c0c1}Proof of Lemma \ref{lem:Pv<c0c1}}

Here we prove that $ $$P_{bad}\left(x\right)<c_{0}\left(x\right)\cdot\left(c_{1}\left(x\right)\right)^{n}$,
where $c_{0}\left(x\right)=\mathrm{e}\cdot x^{dk-1}$ and $c_{1}\left(x\right)=\mathrm{e}^{2x}\cdot x^{\left(dk-2\right)x}$.
\begin{proof}
Recall that $\beta=\frac{m}{n}$ $ $ is the memory utilization, $0<\beta\leq1$
and $x=\frac{v}{n},$ $\frac{dk}{n}\le x<\beta$. (if $v\geq m$ or
$v<dk,$ then the sub-graph $S$ cannot fail).

$P_{bad}\left(v\right)<\binom{n}{v}\cdot p_{bad}\left(v\right)$ where

$p_{bad}\left(v\right)=\binom{m}{v+1}p_{hit}^{v+1}\left(1-p_{1}-p_{hit}\right)^{m-\left(v+1\right)},$

$p_{hit}\left(v\right)=\left(\frac{\binom{v}{k}}{\binom{n}{k}}\right)^{d}$,
and

$p_{1}=d\frac{\left(n-v\right)\binom{v}{k-1}}{\binom{n}{k}}\left(\frac{\binom{v}{k}}{\binom{n}{k}}\right)^{d-1}$
.

$\binom{n}{v},$ $p_{bad}$, $\binom{m}{v+1}$ and $p_{hit}$ are
bounded by:

\begin{equation}
\binom{n}{v}<\left(\frac{\mathrm{e}\cdot n}{v}\right)^{v}=\left(\frac{\mathrm{e}}{x}\right)^{xn}.\end{equation}

\begin{equation}
p_{bad}\left(v\right)=\binom{m}{v+1}p_{hit}^{v+1}\left(1-p_{1}-p_{hit}\right)^{m-\left(v+1\right)}<\binom{m}{v+1}p_{hit}^{v+1}.\end{equation}

\begin{equation}
\binom{m}{v+1}<\left(\frac{\mathrm{e}\cdot m}{v+1}\right)^{v+1}<\left(\frac{\mathrm{e}\cdot m}{v}\right)^{v+1}=\left(\frac{\mathrm{e}\cdot\beta}{x}\right)^{xn+1}<\left(\frac{\mathrm{e}}{x}\right)^{xn+1}.\end{equation}

\begin{equation}
p_{hit}<\left(\frac{v}{n}\right)^{dk}=x^{dk}.\end{equation}

And we get that

\begin{equation}
P_{bad}\left(v\right)<\binom{n}{v}\cdot p_{bad}\left(v\right)<\left(\frac{\mathrm{e}}{x}\right)^{xn}\left(\frac{\mathrm{e}}{x}\right)^{xn+1}\left(x^{dk}\right)^{xn+1}=c_{0}\left(x\right)\cdot c_{1}^{n}\left(x\right).\end{equation}

\end{proof}

\section{\label{sec:Proof-Of-Pv<c2c3c4c5}Proof Of Lemma \ref{lem:Pv<c2c3c4c5}}

Here we prove $\forall x_{0}\le x<\beta$, $P_{bad}\left(x,\beta\right)<\textrm{O}(1)\cdot c_{5}^{n}\left(x,\beta\right)$
where

$c_{5}\left(x,\beta\right)=\left(\frac{1}{1-x}\right)^{\left(1-x\right)}\left(\frac{1}{x}\right)^{x}\left(\frac{\beta}{\beta-x}\right)^{\left(\beta-x\right)}\left(\frac{\beta}{x}\right)^{x}x^{dkx}\left(1-dk\left(1-x\right)x^{dk-1}-x^{dk}\right)^{\beta-x}$.
\begin{proof}
Recall that: \begin{equation}
P_{bad}\left(v\right)<N\left(v\right)\cdot p_{bad}\left(v\right)=\binom{n}{v}\binom{m}{v+1}p_{hit}^{v+1}\left(1-p_{1}-p_{hit}\right)^{m-\left(v+1\right)}.\end{equation}
$\binom{n}{v}$, $p_{hit}$ and $p_{1}$ are bounded by:$ $

\begin{equation}
\binom{n}{v}<\left(\frac{n}{n-v}\right)^{n-v}\left(\frac{n}{v}\right)^{v}=\left(\frac{1}{1-x}\right)^{\left(1-x\right)n}\left(\frac{1}{x}\right)^{xn},\end{equation}

\begin{equation}
\left(x-\frac{k}{n}\right)^{dk}<p_{hit}<x^{dk},\end{equation}

\begin{equation}
dk\frac{\left(1-x\right)}{x}\left(x-\frac{k}{n}\right)^{dk}<p_{1}.\end{equation}

Since $x\ge x_{0}\gg\frac{1}{n},\frac{k}{n}$, we neglect the term
$\frac{1}{n}\ll x$ in the following approximation of $\binom{m}{v+1}$,
and we neglect the term $\frac{k}{n}\ll x$ in the lower bounds for
$p_{hit}$ and $p_{1}$. As $n\rightarrow\infty$, these terms contribute
to $P_{bad}\left(v\right)$ factors which are bounded by $\textrm{O}(1)$$ $.

\begin{equation}
\binom{m}{v+1}<\left(\frac{m}{m-\left(v+1\right)}\right)^{m-\left(v+1\right)}\left(\frac{m}{v+1}\right)^{v+1}<\textrm{O}(1)\cdot\left(\frac{\beta}{\beta-x}\right)^{\left(\beta-x\right)n-1}\left(\frac{\beta}{x}\right)^{xn+1}\end{equation}

\begin{equation}
\left(1-p_{1}-p_{hit}\right)^{m-\left(v+1\right)}<\textrm{O}(1)\cdot\left(1-dk\left(1-x\right)x^{dk-1}-x^{dk}\right)^{\beta n-\left(xn+1\right)}\end{equation}
 The proof for the left inequalities is given in \cite{Space efficient hash tables}
and is also a special case of the more general inequality we prove
later in lemma \ref{lem:multinom_bound}.

$ $We get that:

\begin{equation}
P_{bad}\left(v\right)<\binom{n}{v}\cdot p_{bad}\left(v\right)<\textrm{O}(1)\cdot c_{4}\left(x,\beta\right)\cdot c_{5}^{n}\left(x,\beta\right),\end{equation}
where

\begin{equation}
c_{4}\left(x,\beta\right)=\left(\beta-x\right)x^{dk-1}\left(1-dk\left(1-x\right)x^{dk-1}-x^{dk}\right)^{-1}\end{equation}
 and

\begin{equation}
c_{5}\left(x,\beta\right)=\left(\frac{1}{1-x}\right)^{\left(1-x\right)}\left(\frac{1}{x}\right)^{x}\left(\frac{\beta}{\beta-x}\right)^{\left(\beta-x\right)}\left(\frac{\beta}{x}\right)^{x}x^{dkx}\left(1-dk\left(1-x\right)x^{dk-1}-x^{dk}\right)^{\beta-x}\end{equation}
Since $c_{4}<\textrm{O}\left(1\right),$ we get: $P_{bad}\left(x\right)<\textrm{O}(1)\cdot c_{5}^{n}\left(x,\beta\right)$.
\end{proof}

\section{\label{sec:Proof-of Pv<c6c7}Proof of Lemma \ref{lem:Pv<c6c7}}

Here we prove $P_{fail}\left(v\right)<c_{6}\left(x\right)\cdot c_{7}^{n}\left(x\right)$ 

where $c_{6}\left(x\right)=\mathrm{e}x^{d-1}$ and $c_{7}\left(x\right)=\mathrm{e}^{\frac{\left(k+1\right)x}{k}}x^{\frac{\left(dk-1-k\right)x}{k}}$.
\begin{proof}
According to lemma \ref{lem: low t is bad}, for any set of vertices
V, $\widetilde{p}_{hit}$ decreases when the page size $t$ is multiplied
by an integer. For simplicity, we restrict our analysis to pages of
size $t=k\cdot c$, where $c$ is any integer. The worst case therefore
is when $t=k$ which is equivalent to the classical CUCKOO-DISJOINT
hashing. We use the union bound to obtain $P_{fail}\left(v\right)<N\left(v\right)\cdot p_{fail}\left(v\right)$,
where $N\left(v\right)$ is the number of sub-graphs with $v$ vertices,
where each vertex of the sub-graph is hit by at least one edge, and
$p_{fail}\left(v\right)$ is the probability that a given\emph{ }sub-graph
$S\left(V,E\right)$ was hit by more than $v$ edges. By definition
of $p_{fail}$:

\begin{equation}
p_{fail}\left(v\right)<\binom{m}{v+1}\widetilde{p}_{hit}^{v+1}.\end{equation}
when $t=k$ we get:

\begin{equation}
\widetilde{p}{}_{hit}^{v+1}=\left(x^{d}\right)^{xn+1}=\left(x^{d}\right)\left(x^{dx}\right)^{n},\end{equation}
and

\begin{equation}
N\left(v\right)\le\binom{n/k}{v/k}<\left(\frac{\mathrm{e}n/k}{v/k}\right)^{v/k}=\left(\left(\frac{\mathrm{e}}{x}\right)^{x/k}\right)^{n},\end{equation}
Since when the page size $t$ equals $k$, an element can be placed
either in all of the locations of a page or in none of them.

\begin{equation}
\binom{m}{v+1}<\binom{n}{v+1}<\left(\frac{\mathrm{e}n}{v+1}\right)^{v+1}<\left(\frac{\mathrm{e}n}{v}\right)^{v+1}=\left(\frac{\mathrm{e}}{x}\right)\left(\left(\frac{\mathrm{e}}{x}\right)^{x}\right)^{n}.\end{equation}
We get that:

\begin{equation}
P_{fail}\left(v\right)<N\left(v\right)\cdot p_{fail}\left(v\right)<N\left(v\right)\binom{m}{v+1}\widetilde{p}_{hit}^{v+1}<c_{6}\left(x\right)\cdot c_{7}^{n}\left(x\right),\end{equation}
where $c_{6}\left(x\right)=\mathrm{e}x^{d-1}$ and $c_{7}\left(x\right)=\mathrm{e}^{\frac{\left(k+1\right)x}{k}}x^{\frac{\left(dk-1-k\right)x}{k}}$.
\end{proof}

\section{\label{sec:Proof-of-Lemma multinom_bound}Proof of Lemma \ref{lem:multinom_bound}}

The proof of $\binom{g}{a_{0},...,a_{t}}\leq\prod_{i=0}^{t}\left(\frac{g}{a_{i}}\right)^{a_{i}}$
is a generalization of the proof given in \cite{Space efficient hash tables}.
For any positive integer $t$ and any non-negative integer $g$: $\left(y_{0}+...+y_{t}\right)^{g}=\sum\limits _{\alpha_{0}+...+\alpha_{t}=g}\binom{g}{\alpha_{0},...,\alpha_{t}}\prod_{i=0}^{t}\left(y_{i}^{\alpha_{i}}\right)$,
where the summation is taken over all sequences of non-negative integer
indices $\alpha_{0}$ through $\alpha_{t}$ such that the sum of all
$\alpha_{i}$ is $g$. For the special case where $y_{i}=\frac{a_{i}}{g}$,
$a_{i}$ is a non-negative integer and $\sum_{i=0}^{t}a_{i}=g$ we
get:

\begin{equation}
\binom{g}{a_{0},...,a_{t}}\prod_{i=0}^{t}\left(\frac{a_{i}}{g}\right)^{a_{i}}\leq\sum\limits _{\alpha{}_{0}+...+\alpha{}_{t}=g}\binom{g}{\alpha_{0},...,\alpha{}_{t}}\prod_{i=0}^{t}\left(\frac{a_{i}}{g}\right)^{\alpha{}_{i}}=\left(y_{0}+...+y_{t}\right)^{g}=1\end{equation}

So \begin{equation}
\binom{g}{a_{0},...,a_{t}}\leq\frac{1}{\prod_{i=0}^{t}\left(\frac{a_{i}}{g}\right)^{a_{i}}}=\prod_{i=0}^{t}\left(\frac{g}{a_{i}}\right)^{a_{i}}\end{equation}

\section{\label{sec:Proof-of-Pv<c8c9}Proof of Lemma \ref{lem:Pv<c8c9}}

Here we prove

$P_{bad}\left(\hat{a},\beta\right)<\textrm{O}(1)\cdot c_{8}\left(\hat{a},\beta\right)\cdot c_{9}^{n}\left(\hat{a},\beta\right)$
where

$c_{8}\left(\hat{a},\beta\right)=p_{hit}\left(1-p_{1}-p_{hit}\right)^{-1}\left(\frac{\beta-x}{x}\right)$
and

$c_{9}\left(\hat{a},\beta\right)=\prod_{i=0}^{t}\left(\frac{1}{\hat{a}_{i}}\right)^{\frac{\hat{a}_{i}}{t}}\prod_{i=0}^{t}\binom{t}{_{i}}^{\frac{\hat{a}_{i}}{t}}\left(\frac{\beta}{\beta-x}\right)^{\left(\beta-x\right)}\left(\frac{\beta}{x}\right)^{x}p_{hit}^{x}\left(1-p_{1}-p_{hit}\right)^{\beta-x}.$
\begin{proof}
Recall that $\hat{a}$ is a unit vector and $v$ and $x$ are equal
to the following functions of $\hat{a}$:

\begin{equation}
\sum_{i=0}^{t}\hat{a}_{i}=1\end{equation}
 \begin{equation}
v=\sum_{i=0}^{t}a_{i}\cdot i=g\sum_{i=0}^{t}\hat{a}_{i}\cdot i=\frac{n}{t}\sum_{i=0}^{t}\hat{a}_{i}\cdot i\end{equation}

\begin{equation}
x=\frac{v}{n}=\frac{1}{t}\sum_{i=0}^{t}\hat{a}_{i}\cdot i.\end{equation}
$P_{bad}\left(v\right)<N\left(v\right)\cdot p_{bad}\left(v\right)$,
where

\begin{equation}
N\left(v\right)=\binom{g}{a_{0},...,a_{t}}\prod_{i=0}^{t}\binom{t}{_{i}}^{a_{i}},\end{equation}

\begin{equation}
p_{bad}\left(v\right)=\binom{m}{v+1}p_{hit}^{v+1}\left(1-p_{1}-p_{hit}\right)^{m-\left(v+1\right)},\end{equation}

\begin{equation}
p_{hit}=\left(\frac{\sum_{i=k}^{t}a_{i}\binom{i}{k}}{g\binom{t}{k}}\right)^{d}=\left(\frac{\sum_{i=k}^{t}\hat{a}_{i}\binom{i}{k}}{\binom{t}{k}}\right)^{d},\end{equation}

\begin{equation}
p_{1}=d\left(\frac{\sum_{i=k}^{t}\hat{a}_{i}\left(t-i\right)\binom{i}{k-1}}{\binom{t}{k}}\right)\left(\frac{\sum_{i=k}^{t}\hat{a}_{i}\binom{i}{k}}{\binom{t}{k}}\right)^{d-1}.\end{equation}
Using lemma \ref{lem:multinom_bound} we get:

\begin{equation}
N\left(v\right)=\binom{g}{a_{0},...,a_{t}}\prod_{i=0}^{t}\binom{t}{_{i}}^{a_{i}}\leq\left(\prod_{i=0}^{t}\left(\frac{1}{\hat{a_{i}}}\right)^{\frac{\hat{a}_{i}}{t}}\prod_{i=0}^{t}\binom{t}{_{i}}^{\frac{\hat{a}_{i}}{t}}\right)^{n},\end{equation}
 and

$ $

\begin{equation}
\binom{m}{v+1}<\left(\frac{m}{m-\left(v+1\right)}\right)^{m-\left(v+1\right)}\left(\frac{m}{v+1}\right)^{v+1}<\textrm{O}(1)\cdot\left(\frac{\beta}{\beta-x}\right)^{\left(\beta-x\right)n-1}\left(\frac{\beta}{x}\right)^{xn+1}.\end{equation}
Finally we get:\begin{equation}
P_{bad}\left(v\right)<N\left(v\right)\cdot p_{bad}\left(v\right)=\textrm{O}(1)\cdot c_{8}\left(x,\beta\right)\cdot c_{9}^{n}\left(\hat{a},\beta\right)\end{equation}
 where

\begin{equation}
c_{8}\left(\hat{a},\beta\right)=p_{hit}\left(1-p_{1}-p_{hit}\right)^{-1}\left(\frac{\beta-x}{x}\right),\end{equation}
 and

\begin{equation}
c_{9}\left(\hat{a},\beta\right)=\prod_{i=0}^{t}\left(\frac{1}{\hat{a}_{i}}\right)^{\frac{\hat{a}_{i}}{t}}\prod_{i=0}^{t}\binom{t}{_{i}}^{\frac{\hat{a}_{i}}{t}}\left(\frac{\beta}{\beta-x}\right)^{\left(\beta-x\right)}\left(\frac{\beta}{x}\right)^{x}p_{hit}^{x}\left(1-p_{1}-p_{hit}\right)^{\beta-x}.\end{equation}
\end{proof}


\begin{thebibliography}{14}
\bibitem{Tight Thresholds for Cuckoo Hashing}M. Dietzfelbinger, A.
Goerdt, M. Mitzenmacher, A. Montanari, R. Pagh, and M. Rink, {}``Tight
thresholds for cuckoo hashing via XORSAT''. ICALP (1) 2010: 213-225

\bibitem{3.5 way}E. Lehman and R. Panigrahy, {}``3.5-way cuckoo
hashing for the price of 2 and a bit''. In Proceedings of the 17th
Annual European Symposium on Algorithms, pages 671\textendash{}681,
2009.

\bibitem{Almost random graphs}M. Dietzfelbinger and P. Woelfel, {}``Almost
random graphs with simple hash functions'', 35th STOC, pages 629\textendash{}638,
2003.

\bibitem{cuckoo hashing}R. Pagh and F. Rodler. Cuckoo hashing. Journal
of Algorithms 51 (2004), p. 122-144.

\bibitem{Space efficient hash tables}P. Sanders, D. Fotakis, R. Pagh
and P. Spirakis. {}``Space efficient hash tables with worst case
constant access time''. Theory of computing systems 38, 229-248 (2005).

\bibitem{Efficient hashing with lookups in two memory accesses}R.
Panigrahy, {}``Efficient hashing with lookups in two memory accesses'',
SODA '05: Proceedings of the sixteenth annual ACM-SIAM symposium on
Discrete algorithms, pages 830\textendash{}839, Philadelphia, PA,
USA, 2005. Society for Industrial and Applied Mathematics.

\bibitem{The k-orientability thresholds}D. Fernholz and V. Ramachandran.
{}``The k-orientability thresholds for $G_{n,p}$''. In SODA '07:
Proceedings of the eighteenth annual ACM-SIAM symposium on Discrete
algorithms, pages 459\textendash{}468, Philadelphia, PA, USA, 2007.
Society for Industrial and Applied Mathematics.

\bibitem{The random graph threshold for k-orientiability}J. A. Cain,
P. Sanders and N. Wormald, {}``The random graph threshold for k-orientiability
and a fast algorithm for optimal multiple-choice allocation''. In
SODA '07: Proceedings of the eighteenth annual ACM-SIAM symposium
on Discrete algorithms, pages 469\textendash{}476, Philadelphia, PA,
USA, 2007. Society for Industrial and Applied Mathematics.

\bibitem{Stash}A. Kirsch, M. Mitzenmacher, and U. Wieder. {}``More
robust hashing: Cuckoo hashing with a stash''. SIAM Journal on Computing
39:1543-1561, 2009.

\bibitem{Azar}Y. Azar, A. Z. Broder, A. R. Karlin, and E. Upfal.
Balanced allocations. SIAM Journal on Computing, 29:180-200, 1999.A
preliminary version of this paper appeared in Proceedings of the Twenty-Sixth
Annual ACM Symposium on the Theory of Computing, 1994.

\bibitem{Multiple-orientability Thresholds}Fountoulakis, N., Khosla,
M., Panagioutou, K.: The Multiple-orientability Thresholds for Random
Hypergraphs. In: Proc. 22nd SODA. SIAM (2011)

\bibitem{Maximum Matchings}A. Frieze, P. Melsted, Maximum Matchings
in Random Bipartite Graphs and the Space Utilization of Cuckoo Hashtables,
CoRR abs/0910.5535: (2009)

\bibitem{Orientability of Random Hypergraphs}Fountoulakis, Panagiotou,
Orientability of Random Hypergraphs and the Power of Multiple Choices,
ICALP (1) 2010: 348-359 

\bibitem{Cuckoo Hashing with Pages}M. Dietzfelbinger, M. Mitzenmacher
and M. Rink, ''Cuckoo Hashing with Pages'', arXiv:1104.5111v1 {[}cs.DS{]},
2011
\end{thebibliography}
\end{document}